\theoremstyle{plain} \newtheorem{thm}{Theorem}[section]
\theoremstyle{plain} \newtheorem{lem}[thm]{Lemma}
\theoremstyle{plain} 
\theoremstyle{plain} \newtheorem{prop}[thm]{Proposition}
\theoremstyle{plain} \newtheorem{cor}[thm]{Corollary}
\theoremstyle{definition} \newtheorem{defn}[thm]{Definition}
\newcommand{\real}[1]{\ensuremath{\mathbb{R}^{#1}}}
\title{{\bfseries Generalized Noether Theorems for Field Theories Formulated in Minkowski Spacetime}}
\author{M. Holman \\ Department of Physics and Astronomy, \hspace{0.05cm} Utrecht University, \\ Princetonplein 5,  \hspace{0.05cm} 3584 CC Utrecht, \hspace{0.05cm} The Netherlands \\ e-mail : {\ttfamily m.holman@phys.uu.nl}}
\begin{document}
\maketitle 
\begin{abstract}
\noindent New symmetry theorems are obtained for field theories formulated in Minkowski spacetime, based on the recognition 
that such theories should be diffeomorphism invariant. These theorems, which are in fact generalized Noether theorems, have
several nontrivial ramifications. 
One immediate consequence is the necessity to re-evaluate some of the default characterizations of the original Noether 
theorems as \emph{general statements} - for instance, the statement that every global invariance of the action gives rise 
to a conserved current.
As it turns out, this latter statement remains valid, but apparently only as a consequence of \emph{both} the generalized 
first and second Noether theorem (at any rate, it is very far from clear how to avoid such a conclusion).
A ramification of a more constructive nature consists in a novel expression for the so-called Belinfante stress-energy 
tensor, for which it is evident that it generalizes to the correct stress-energy tensor within the context of general 
relativity. The crucial symmetry property here is diffeomorphism invariance, rather than Poincar\'e invariance.
Using this new form for the stress-energy tensor, an alternative route to general relativity also becomes available, which, 
in contrast to many standard treatments, is intrinsically based on a variational formulation - i.e. without the need to
have prior knowledge of the form of Einstein's equation.
The connection of these results to related results previously established in the literature (although in a completely
different context) will be pointed out where applicable.
\end{abstract}

\section{Introduction}\label{intro}

\noindent Among the notable achievements in 20th century mathematical physics are undoubtedly the two symmetry theorems
put forward by \citeANP{Noether} in 1918. Roughly speaking (and properly paraphrased), Noether's first theorem states that
there exists a ``conserved current'' for every \emph{global} invariance of the action associated with a given physical theory,
while Noether's second theorem expresses the fact that for every \emph{local} invariance of the action, certain identities -
so-called generalized Bianchi identities - obtain. Classic examples illustrating the use of the first theorem are the conservation 
laws for energy, momentum and angular momentum, corresponding to the isometries of Minkowski spacetime, while the existence of 
certain constraints on the \emph{form} of any Lagrangian density involving gauge potentials and invariant under the corresponding 
local symmetries, can be viewed, essentially, as an illustration of the use of the second theorem.\\
Despite their common use as a standard tool in theoretical discourses however, there is something rather unsatisfactory about
both theorems in connection to external symmetries (which need not be ``spacetime symmetries''; more on this terminology in 
section \ref{Noetherthms}) within the context of Minkowski spacetime\footnote{Of course, within the context of general relativity, 
global external symmetries are generally absent, so that Noether's first theorem is in general not even applicable as far as 
such symmetries are concerned.}.
First, the components of the four conserved currents associated with the constant translations in space and time define 
a tensor of type $(0,2)$, $\Theta_{ab}$, the so-called canonical stress-energy-momentum  tensor (or canonical stress-energy
tensor, for short), but this tensor has unphysical properties in general.
That is, $\Theta_{ab}$ is not in general symmetric (so that it cannot be naturally generalized to a stress-energy tensor within the
context of general relativity), while for physically relevant theories involving gauge potentials, $\Theta_{ab}$ is not even
gauge invariant. To some extent, these problems can be remedied by noting that physical observables such as energy and momentum 
only correspond to the ``conserved charges'' of the corresponding Noether currents - i.e. to spatial integrals of the timelike
components of the Noether currents - and that there is moreover a way of modifying $\Theta_{ab}$ into a unique symmetric
tensor, which - under fairly general assumptions - gives rise to the same conserved charges\footnote{Cf. \citeANP{Belinfante1} \citeyear{Belinfante1,Belinfante2}.}.
As a matter of general principles (and logic) however, it is problematic to argue that a symmetry principle leads to a 
certain unique quantity, $X$, only to abandon this quantity in favour of another (still unique) quantity, $Y$, after discovering 
that $X$ has certain undesirable features. If the two quantities are ``physically equivalent'' in every respect
there is no need to introduce the quantity $Y$. If $Y$ actually is to be preferred for reasons of theoretical or empirical
consistency - as is the case for the stress-energy tensor - the connection with the symmetry principle becomes a priori unclear. As will be discussed
in subsection \ref{SEMtensor}, the resolution to this conceptual difficulty partly lies in the recognition that it is not meaningful
to consider constant translations in space and time in isolation, within a relativistic context. Yet, it will also be seen 
that this still leaves unanswered the question of the general interpretation of Noether's first theorem - i.e. essentially 
the statement that every independent, global invariance of the action gives rise to a conserved current - in terms of physics.\\
A problematic feature of Noether's second theorem derives from the fact that this theorem is formulated for \emph{arbitrary}
transformations of the dependent, \emph{as well as} the independent variables (i.e. for arbitrary transformations of the fields
on spacetime, as well as the spacetime coordinates) that leave invariant the action : if the expressions for the generalized
Bianchi identities commonly found in the literature are applied to Minkowski spacetime based field theories in the case of 
general coordinate transformations, inconsistencies are found to result\footnote{For such standard expressions, see e.g. 
\citeN{Utiyama}, or, more recently, \citeN{BradingBrown}.}. 
Now, it may appear to the reader that there is something illegitimate about asking for general coordinate 
invariance of field theories within the context of special relativity, but, in fact, no such thing is the case. Indeed, as is 
well known, \emph{any} physical theory formulated in terms of tensor fields defined on an arbitrary spacetime manifold, $M$, 
is form invariant with respect to arbitrary diffeomorphisms of $M$ (something which may be regarded, essentially, as a statement
about the general covariance of such theories\footnote{For a clear discussion of these points, see \citeN{Wald}, in particular section 4.1.}).
Regarding the apparent inapplicability of Noether's second theorem in the case of diffeomorphism invariance as pertaining to generic
(tensor) field theories defined in Minkowski spacetime, it thus seems that something has been overlooked in the standard 
discussions in presenting the most generic form of the generalized Bianchi identities.
Although it is not difficult to identify some possible causes in this respect (see sections \ref{Noetherthms} and \ref{comparison}
for some further remarks on this point), the key issue, as will be seen, is that it is (tacitly) assumed in these discussions
that all the dependent variables are also \emph{dynamical} variables. This means in particular that, when considered in 
complete generality, \emph{both} Noether theorems are actually inapplicable within the context of Minkowski spacetime.
Although this point has been noted in the literature\footnote{Cf. \citeN{Trautman}. See also \citeN{IyerWald}. In the main
part of the latter treatment a technical condition of ``diffeomorphism covariance'' - not to be confused with the 
condition of general covariance as characterized above - is imposed, which automatically excludes the presence of non-dynamical 
fields in the action.}, its consequences do not appear to have
been worked out so far, nor does it seem to be a generally appreciated fact in discussions on the subject.\\
In section \ref{Noetherthms}, derivations will be presented of (i) suitably generalized versions of Noether's two main 
theorems within the context of Minkowski spacetime and (ii) corresponding generalized expressions for the ``improper currents'' 
associated with general local symmetries. 
For Minkowski spacetime isometries, the results obtained here fully reduce to the familiar formulations of Noether's theorems,
something which is discussed in detail in section \ref{comparison}. As also discussed in that section however, there are several reasons, 
both empirical and theoretical, to attribute significance also to non-isometries in the case of Minkowski spacetime. Furthermore, 
the generalized Noether theorems derived here are in principle also relevant for alternative theories of gravity that involve 
``prior geometry'', such as Rosen's \citeyear{Rosen} bimetric theory. Although such alternatives go against the current paradigm 
that any viable theory of gravitation should be ``background independent'', the present experimental situation still seems to
leave some room for these theories\footnote{Cf. \citeN{Will}.} and the formulation of generalized Noether symmetry theorems may therefore 
not be a completely academic exercise in this case.\\
At any rate, in the present article attention will be restricted to field theories formulated in Minkowski spacetime and
it will be seen in section \ref{applications} that already in this case the generalized symmetry theorems have several nontrivial
ramifications. Perhaps the most suprising of these, is that the previously mentioned difficulty of associating a suitable 
stress-energy tensor with the constant translations in space and time via Noether's original theorem for \emph{global} symmetries, 
is resolved in a very natural way. The identically vanishing expression involving the (undifferentiated) ``improper currents'' 
associated with general \emph{local} symmetries, when applied to arbitrary spacetime diffeomorphisms, 
naturally decomposes into a symmetric stress-energy tensor, $T_{ab}$, which is \emph{manifestly} the flat spacetime 
``specialization'' of the general relativistic stress-energy tensor (i.e. defined as a functional derivative 
of the matter action with respect to the curved spacetime metric), and an expression which is the sum of the canonical 
stress-energy tensor and two additional terms.
Since the unique stress-energy tensor that results from adding specific improvement terms to the canonical stress-energy
tensor via the Belinfante method, when generalized to curved spacetime via the ``minimal substitution rule'', is known to 
agree with the general relativistic stress-energy tensor, one expects the sum of the two additional terms to be equivalent 
to the Belinfante improvement term and this expectation will indeed be proven correct in subsection \ref{SEMtensor}. 
As a result of this, the standard proof that the generalized Belinfante tensor agrees with the general relativistic 
stress-energy tensor shortens considerably. It is important to note however that, in contrast to the original Belinfante
method, the key symmetry property in the flat spacetime case is general diffeomorphism invariance, rather than 
Poincar\'e invariance.\\ 
A second nontrivial consequence of the generalized Noether theorems builds on the results of subsection \ref{SEMtensor} and
consists of an alternative route to Einstein's equation. In many standard treatments this equation is derived from 
various reasonable assumptions first and a variational formulation for general relativity is introduced only at a secondary
stage. Consistency then requires that the stress-energy tensor appearing in Einstein's equation be \emph{defined} as the
functional derivative of the matter action with respect to the metric. By contrast, if the alternative form for the 
stress-energy tensor is taken as a starting point, Einstein's equation can be derived \emph{intrinsically} from a 
variational formulation (together with various reasonable assumptions) - i.e. without the need to have prior knowledge of 
what it should be. Although both routes to general relativity are of course equally valid, they are conceptually quite 
distinct.\\
There is a certain non-uniformity in the notation for tensors employed throughout the present article. This is a result of
the circumstance that, on the one hand, general statements involving tensors are best expressed in terms of abstract 
indices\footnote{See e.g. \citeN{PenRin} for an exposition.} (here represented by lowercase latin letters), while, 
on the other hand, many of the results obtained in the following sections explicitly involve the introduction of coordinates,
and are therefore best expressed in terms of component indices (here represented by lowercase greek letters).
The reader who is confused by this notation may simply substitute abstract indices $a,b,c,\cdots$, wherever they appear,
by component indices $\mu, \nu, \kappa, \cdots$ and view the accompanying tensor equations as equations for the tensor
components with respect to a coordinate basis. Further (notational) conventions include bold face notation for differential
forms and the $-+++$ convention for the signature of the spacetime metric.

\section{Generalized Noether Theorems}\label{Noetherthms}

\noindent Recall that the integral of a function $F$ over an orientable spacetime manifold, $M$, is defined in
terms of the integration of (measurable) $4$-form fields. That is, by definition of orientability, there exists a volume
form on $M$, i.e. a continuous, nowhere vanishing $4$-form field, $\boldsymbol{\epsilon}$, and the integral of $F$ over $M$
(with respect to $\boldsymbol{\epsilon}$) is defined as $\int_M \! F \boldsymbol{\epsilon}$, where the integral of
the $4$-form field $F \boldsymbol{\epsilon}$ over $M$ is defined in the standard manner\footnote{See e.g. \citeN{AbMaRa}.}.
The arbitrariness in this definition that results from the non-uniqueness of $\boldsymbol{\epsilon}$ can be resolved, because the spacetime
metric, $g_{ab}$, can be used to impose the condition $\epsilon_{abcd} \epsilon^{abcd} = - 4 !$, which fixes $\boldsymbol{\epsilon}$
(up to overall orientation) to locally take the form $\boldsymbol{\epsilon} = \sqrt{-g} \, dx^0 \wedge dx^1 \wedge dx^2 \wedge dx^3$,
where $g = \det (g_{\mu \nu})$, $g_{\mu \nu}$ denoting the local components of the metric tensor (the $4$-form
$dx^0 \wedge dx^1 \wedge dx^2 \wedge dx^3$ represents the standard volume form associated with the dual basis field, 
$d x^{\mu}$ defined by the local coordinates $\{ x^{\mu} \}$). In order to define an action 
integral for a field theory on $M$ that does not have a dependence on the metric field hidden in $\boldsymbol{\epsilon}$, 
it is customary to specify a fixed volume form, $\boldsymbol{e}$, on $M$ and to then define the action as an integral of 
an appropriate \emph{scalar density} over $M$ with respect to $\boldsymbol{e}$. This means that the 
Lagrangian density, $\mathscr{L}$, can be expressed in the form $\mathscr{L} = \sqrt{-g} \mathscr{F}$, where $\mathscr{F}$ 
is a true scalar quantity that locally depends on the relevant field variables and a finite number of their spacetime 
covariant derivatives\footnote{More precisely, $\mathscr{L}$ is a scalar density of weight $1$. For a definition of the 
notion of a general tensor density of weight $w$, see e.g. \citeN{MTW}. Also, in writing $\mathscr{L}$ in the given form, 
it is implicitly understood that when local coordinates are introduced, the fixed volume form $\boldsymbol{e}$ equals the 
standard volume form associated with these coordinates (otherwise it is neccesary to divide $\sqrt{-g}$ by the component of 
$\boldsymbol{e}$ with respect to the standard volume form).}. In other words, on an arbitrary spacetime manifold, $M$, the 
action, $S$, for any field (or set of fields) $\psi$, is of the generic form
\begin{equation}\label{actiongeneric}
S  [\psi ] \; = \; \int_M \! \sqrt{-g} \mathscr{F} \, \boldsymbol{e}
\end{equation}
which is easily shown to be invariant under general coordinate transformations.\\
Now, in Minkowski spacetime, $(\real{4} , \eta_{ab})$, the fact that the Lagrangian density is a scalar density is usually 
forgotten, because in this case a canonical set of global coordinate systems (corresponding to the inertial motions in Minkowski
spacetime) is available for which $\sqrt{- \eta} = 1$, and as long as only transformations between these coordinate systems
are considered, there is no difference in properties between a scalar density and a true scalar. Given a set of global inertial
coordinates, $\{ x^{\mu} \}$, the action for any field theory can then simply be expressed as
\begin{equation}\label{action1}
S  [\psi ] \; = \; \int_{\real{4}} \! d^4x \, \mathscr{L} 
\end{equation}
which is the standard form for any action within the context of special relativity (with $\mathscr{L}$ treated as a true 
scalar\footnote{Eq. (\ref{action1}) follows from
the general expression (\ref{actiongeneric}) and the standard definition for the integral of a 4-form field.
In any local coordinate neighbourhood, $O$, of a generic spacetime for which $\boldsymbol{e}$ equals the standard volume
form, the integral of $\mathscr{L} \boldsymbol{e} = \sqrt{-g} \mathscr{F} \boldsymbol{e}$ over $O$ is defined as
\begin{equation*}
\int_O \!  \sqrt{-g} \mathscr{F} \boldsymbol{e} \; := \; \int_{\varphi(O)} \! d^4 x \, \sqrt{-g} \mathscr{F}
\end{equation*}
where the right-hand side of this equation is a standard integral over the open subset $\varphi(O)$ of \real{4} defined
by the local chart~$\varphi$.}).
The expression (\ref{action1}) is still generally correct (i.e. for inertial and non-inertial coordinate systems alike),  
as long as it is recalled that $\mathscr{L}$ is really a scalar density. Under a general, orientation-preserving
coordinate transformation, $x^{\mu} \rightarrow x'^{\mu}$, $\mathscr{L}$ transforms according to
$\left. \mathscr{L} \right|_{x} \longrightarrow \left. \mathscr{L} \right|_{x'} = \det (\partial x / \partial x') \left. \mathscr{L} \right|_{x}$ and hence
\begin{equation}
S  [\psi ] \; \longrightarrow S'[\psi'] \; = \; \int_{\real{4}} \! d^4x' \, \det (\partial x / \partial x') \left. \mathscr{L} \right|_{x} \; = \; S [\psi]
\end{equation}
where the standard change of variables theorem was used to express $S'$ as an integral with respect to the original
coordinates, to obtain the last equality. In fact, the above argument is valid in any local coordinate neighbourhood, which
proves the diffeomorphism invariance of the general action (\ref{actiongeneric}).
The reason for elaborating upon these elementary facts in some detail, is that the twofold appearance of the Jacobian
determinant associated with a generic coordinate transformation in any action integral of the form (\ref{action1}), with
the two contributions cancelling exactly, appears to have been ignored in some standard treatments\footnote{Cf. \citeN{Utiyama}, 
eq. (1.7) and the equation preceding it.}.\\
Having noted these facts, one can now proceed and derive generalized versions of the two Noether theorems.
In what follows, a generic, Lie algebra valued, dynamical tensor field, $\psi$, on Minkowski spacetime, will alternatively be denoted as
$\psi^i$ (i.e. with all indices - whether referring to spacetime or to an ``internal'' Lie algebra - represented by one upper
index $i$, which, moreover, also enumerates the various fields present). 
For present purposes, it is sufficient to assume that $\mathscr{L}$ is a local density that depends only on these fields,
their first spacetime derivatives and the nondynamical, flat background metric\footnote{The restrictions that spacetime is 
globally flat and that no second- or higher-order derivatives of the fields appear in $\mathscr{L}$ are imposed here only 
for convenience and are of no fundamental significance (see e.g. \citeN{LeeWald} for a more general treatment).}.
\begin{defn}\label{defsymm}
Given a classical action (\ref{action1}) with a local Lagrangian density of the form just specified, an infinitesimal transformation
\begin{equation}\label{inftrans1}
x \rightarrow x' \; = \; x \: + \: \delta x \qquad \quad \psi^i (x) \rightarrow {\psi'}^i (x') \; = \; \psi^i (x) \: + \: \delta \psi^i (x)
\end{equation}
which leaves invariant $S$, is called an infinitesimal \emph{symmetry transformation}. 
This definition can obviously be extended to finite transformations and it is then natural to assume - as will indeed
be done in what follows - that the set of all (finite) symmetry transformations thereby obtained carries a Lie group 
structure. In the transformations (\ref{inftrans1}), $x$ and $x'$ are taken to represent different coordinates of the 
same point in Minkowski spacetime, so that $\delta \psi^i$ represents the difference between the fields ${\psi'}^i$ and 
$\psi^i$ evaluated at the same spacetime point.
\end{defn}
\noindent More generally, a symmetry transformation may be defined as a finite transformation generated by (\ref{inftrans1}) which
leaves unaffected the dynamical field equations associated with the action. This latter definition of symmetries (which allows 
the action to be changed by a boundary term, for instance) will not be necessary for what follows however and will therefore
not be adopted. As a complement to the variation $\delta \psi^i$ in (\ref{inftrans1}), one could also consider evaluating 
${\psi'}^i$ at the same coordinate value, $x$, in the new coordinate system (corresponding to a different point in Minkowski 
spacetime) and define the (total) variation
\begin{equation}
\bar{\delta} \psi^i (x) \; := \; {\psi'}^i (x) \: - \: \psi^i (x)
\end{equation}
This way, one obtains to first order
\begin{equation}\label{inftrans2}
\delta \psi^i \; = \; \bar{\delta} \psi^i \: + \: (\partial_{\mu}\psi^i) \delta x^{\mu}
\end{equation}
where all quantities are understood to be evaluated at $x$ (here and elsewhere in this article the summation convention for 
summing over contracted (multi-)indices is followed). For future reference it is useful to state the following definition.
\begin{defn}
Infinitesimal symmetry transformations for which $\delta x^{\mu} = 0$ are called (infinitesimal) \emph{internal symmetries},
whereas infinitesimal symmetry transformations induced entirely by coordinate transformations, $x \rightarrow x + \delta x$, are referred 
to as (infinitesimal) \emph{external symmetries}\footnote{Very often, external symmetries are alternatively referred to as 
``spacetime symmetries''. In the present work, spacetime symmetries always refer to symmetries \emph{of} spacetime which preserve 
at least its causal structure (and thus in particular preserve more than just the local manifold structure, which is all that
is preserved by generic external symmetries).}.
\end{defn}
\noindent It is now straightforward to proceed as follows. Let $\Omega$ denote a compact region of Minkowski spacetime.
Given \emph{any} infinitesimal transformation (\ref{inftrans1}) of both the dependent and the independent variables, the action
transforms into
\begin{eqnarray}
S'[\psi'] & = & \int_{\Omega'} \! d^4x' \, \det (\partial x / \partial x') \left. \mathscr{L} (\psi + \delta \psi , \partial \psi + \delta \partial \psi , \eta + \delta \eta) \right|_{x}  \nonumber \\
	  & := & S[\psi] \: + \: \delta S			\label{varaction0}
\end{eqnarray}
and it is easily checked that, to first order, $\delta S$ can be expressed as 
\begin{equation}\label{NoetherVP}
\delta S \; = \; \int_{\Omega} \! d^4x \, \left\{ \partial_{\mu} \left( \frac{\partial \mathscr{L}}{\partial (\partial_{\mu} \psi^i)} \delta \psi^i \: + \: \Theta^{\mu}_{\mspace{12mu} \nu} \delta x^{\nu} \right) \: + \: E_i(\psi) \bar{\delta} \psi^i 
\: - \: \left( \mathscr{L} \partial_{\mu} \delta x^{\mu} - \frac{\partial \mathscr{L}}{\partial \eta^{\mu \nu}} \delta \eta^{\mu \nu} \right) \right\}
\end{equation}
where the quantities $E_i$ represent the familiar Lagrange expressions 
\begin{equation}\label{Lagrexpr}
E_i (\psi)  \; := \; \frac{\partial \mathscr{L}}{\partial \psi^i} \: - \: \partial_{\mu} \frac{\partial \mathscr{L}}{\partial (\partial_{\mu} \psi^i)}          
\end{equation}
and where the 16 quantities $\Theta^{\mu}_{\mspace{12mu} \nu}$ are defined by
\begin{equation}\label{canonicalSEtensor}
\Theta^{\mu}_{\mspace{12mu} \nu} \; := \; - \, \frac{\partial \mathscr{L}}{\partial (\partial_{\mu} \psi^i)} \partial_{\nu} \psi^i  \; + \; \mathscr{L} \delta^{\mu}_{\mspace{12mu} \nu}
\end{equation}
and denote the components of the \emph{canonical stress-energy tensor} mentioned in the introduction\footnote{It should be
noted that there is a slight ambiguity in notation in that in eq. (\ref{varaction0}) and all expressions derived from it, 
$\mathscr{L}$ actually represents the scalar quantity (previously denoted by $\mathscr{F}$) which it equals in any global 
inertial coordinate system and an additional factor $\sqrt{- \eta}$ (evaluated at $x$) should in general be included in 
the integrand on the right-hand side of the first line of eq. (\ref{varaction0}). This should be kept in mind especially
when considering quantities involving derivatives with respect to $\eta$, such as in eq. (\ref{NoetherVP}) or in the definition
(\ref{stressenergytensor1}) in the main text. The reason for writing $\mathscr{L}$ instead of 
$\mathscr{F}$ is simply to facilitate comparison with standard treatments of the Noether theorems, as referred to earlier.
It is also important to note that differentiation with respect to $\eta$ is only defined for variations in $\eta$
towards other flat metrics and should therefore be understood symbolically.
Finally, it is noted that the sign in the definition (\ref{canonicalSEtensor}) of the canonical stress-energy tensor
is dictated by the use of the $-+++$ signature for the metric and the requirement that $\Theta_{00}$ equals the (positive) 
Hamiltonian density as determined from the standard canonical formalism.\label{notationremark}}.
Since the region $\Omega$ is further arbitrary, one thus obtains the following key result.
\begin{lem}[Solution to Noether's Variational Problem]\label{NoetherVPlem}
For an infinitesimal transformation (\ref{inftrans1}) to be a symmetry transformation, it is necessary that
\begin{equation}\label{NoetherVP1}
\partial_{\mu} \left( \frac{\partial \mathscr{L}}{\partial (\partial_{\mu} \psi^i)} \delta \psi^i \: + \: \Theta^{\mu}_{\mspace{12mu} \nu} \delta x^{\nu} \right) \: + \: E_i(\psi) \bar{\delta} \psi^i \: - \: \left( \mathscr{L} \partial_{\mu} \delta x^{\mu} - \frac{\partial \mathscr{L}}{\partial \eta^{\mu \nu}} \delta \eta^{\mu \nu} \right) \; = \; 0
\end{equation}
or, alternatively, 
\begin{equation}\label{NoetherVP2}
\partial_{\mu} \left( \frac{\partial \mathscr{L}}{\partial (\partial_{\mu} \psi^i)} \bar{\delta} \psi^i \: + \: \mathscr{L} \delta x^{\mu} \right) \; =  \; - E_i(\psi) \bar{\delta} \psi^i \: + \: \left( \mathscr{L} \partial_{\mu} \delta x^{\mu} - \frac{\partial \mathscr{L}}{\partial \eta^{\mu \nu}} \delta \eta^{\mu \nu} \right)
\end{equation}
\end{lem}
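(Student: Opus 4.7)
My plan is to treat the lemma as the immediate consequence of two steps: a localisation argument that converts the vanishing of $\delta S$ into a pointwise identity, and an algebraic rewriting that delivers the second form. The preparatory computation yielding eq.~(\ref{NoetherVP}) has already been carried out in the text, so I would not redo it; what remains is to exploit it.

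For (\ref{NoetherVP1}), I would invoke Definition~\ref{defsymm}, which gives $\delta S = 0$ whenever (\ref{inftrans1}) is a symmetry. Since the identity (\ref{NoetherVP}) was established for an \emph{arbitrary} compact region $\Omega \subset \real{4}$ while the transformation data $(\delta x^{\mu}, \delta \psi^{i}, \delta \eta^{\mu\nu})$ are independent of the choice of $\Omega$, a standard continuity/localisation argument (fundamental lemma of the calculus of variations, in its form for integrals over arbitrary compact subsets) forces the integrand to vanish pointwise. The fact that the integrand is continuous in $x$---it depends on $\psi^{i}$, $\partial_{\mu}\psi^{i}$, $\eta^{\mu\nu}$ and the prescribed variations evaluated at $x$---makes this step essentially automatic, so this is where I would be briefest.

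For (\ref{NoetherVP2}), the plan is purely algebraic: substitute relation~(\ref{inftrans2}), namely $\delta \psi^{i} = \bar{\delta}\psi^{i} + (\partial_{\nu}\psi^{i})\delta x^{\nu}$, into the divergence term of (\ref{NoetherVP1}). This produces an extra contribution $\partial_{\mu}\!\left[(\partial \mathscr{L}/\partial(\partial_{\mu}\psi^{i}))(\partial_{\nu}\psi^{i})\delta x^{\nu}\right]$ which, when added to $\partial_{\mu}(\Theta^{\mu}{}_{\nu}\delta x^{\nu})$, telescopes under definition~(\ref{canonicalSEtensor}) to $\partial_{\mu}(\mathscr{L}\delta^{\mu}{}_{\nu}\delta x^{\nu}) = \partial_{\mu}(\mathscr{L}\delta x^{\mu})$. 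Transposing the remaining terms to the right-hand side then gives exactly (\ref{NoetherVP2}).

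The only genuinely delicate point, rather than a real obstacle, is the bookkeeping flagged in footnote~\ref{notationremark}: the $\sqrt{-\eta}$ suppressed inside the written $\mathscr{L}$ contributes to $\partial \mathscr{L}/\partial \eta^{\mu\nu}$, and the variations $\delta \eta^{\mu\nu}$ are to be read symbolically (restricted to flat-to-flat variations). I would keep this explicit so that no sign is lost, but it does not alter the structure of either (\ref{NoetherVP1}) or (\ref{NoetherVP2}); once (\ref{NoetherVP}) is accepted as written, the lemma reduces to the localisation step plus a short rearrangement.
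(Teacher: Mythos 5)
Your proposal is correct and follows essentially the same route as the paper: the paper likewise takes the expression (\ref{NoetherVP}) for $\delta S$ as already established, invokes $\delta S = 0$ together with the arbitrariness of the compact region $\Omega$ to conclude that the integrand vanishes pointwise (giving (\ref{NoetherVP1})), and obtains (\ref{NoetherVP2}) by exactly the substitution of (\ref{inftrans2}) into the divergence term, with the $\Theta^{\mu}{}_{\nu}$ contribution collapsing to $\mathscr{L}\,\delta x^{\mu}$ via (\ref{canonicalSEtensor}). Your explicit attention to the localisation step and to the $\sqrt{-\eta}$ bookkeeping of footnote \ref{notationremark} only spells out what the paper leaves implicit.
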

\noindent For a given transformation (\ref{inftrans1}) of specified type, either eq. (\ref{NoetherVP1}) or eq. (\ref{NoetherVP2})
expresses a condition on the form of the action, that must be satisfied if the transformation is to be a symmetry.
As such, these equations provide a solution to the ``variational problem'' with which Noether was concerned in her 1918
article. It is important to note however that the condition expressed by either eq. (\ref{NoetherVP1}) or eq. (\ref{NoetherVP2})
deviates from the expression for this condition commonly found in the literature\footnote{See e.g. \citeN{BradingBrown} and
further references therein.}. 
As already remarked in the introduction, there are a number of possible explanations for this disagreement.
For instance, the absence of the first term in round brackets on the right-hand side of eq. (\ref{NoetherVP2}) in the usual 
expressions can be explained by a failure to take into account that $\mathscr{L}$ is a scalar density, whereas the absence 
of both terms in round brackets on the right-hand side of eq. (\ref{NoetherVP2}) in the usual expressions is actually 
explainable as a consequence of a (tacit) assumption that it suffices to consider only \emph{isometries} in the case of Minkowski 
spacetime (as will be seen in section \ref{comparison} however, it is not in fact \emph{necessary} for a coordinate 
transformation of Minkowski spacetime to be an isometry, in order for the general solution to the variational problem 
to agree with the expression commonly found for it in the literature).
However, a closer inspection of the standard discussions reveals that these possible accounts are insufficient as 
genuine explanations of the disagreement. This is because the standard expressions for the condition (\ref{NoetherVP2}) 
are formulated (or at least claimed to be so) for arbitrary coordinate transformations of arbitrary spacetimes and these 
expressions should therefore in particular be valid in the case of arbitrary coordinate transformations of Minkowksi spacetime. 
In fact, as will become clear in the following sections, the root of the disagreement between condition (\ref{NoetherVP2})
and the standard expressions for it, lies in the tacit assumption within the standard treatments that all dependent variables
are actually dynamical variables.\\
Before deriving generalized Noether theorems from the general solution to Noether's variational problem in the case of 
Minkowski spacetime, it is useful to further differentiate between types of symmetry.
\begin{defn}
Symmetry transformations (\ref{inftrans1}) that generate a finite-dimensional Lie group and for which there exists no compact 
spacetime region outside which they are trivial (with the exception of the trivial transformation itself), are 
called \emph{global symmetries}\footnote{The reason for this (somewhat unusual) second condition is to exclude diffeomorphisms
generated by any nonzero vector field of compact support on $\real{4}$ from being labelled as global symmetries.
Global symmetries thus correspond to finite-dimensional Lie groups and are ``globally nontrivial''.}.
Symmetry transformations (\ref{inftrans1}) that generate a Lie group of non-finite dimension larger than $\aleph_0$,
are called \emph{local symmetries} or \emph{gauge symmetries}\footnote{In modern discourses, it is customary to use the terms ``gauge (symmetry) transformation'' and ``local 
symmetry transformation'' (either internal or external) interchangeably and this modern terminology is adopted here.
In particular, the phrase ``local gauge transformation'' is regarded as tautological. It should also be noted that local
symmetry groups are not Lie groups in the strict technical sense, but so-called Fr\'echet-Lie groups. Nevertheless, the standard convention 
will be followed here of understanding the term ``Lie group'' in a more generic sense that includes local symmetry groups.\label{gaugeterminology}}.
\end{defn}
\noindent If the action for a given physical theory is known to admit a certain group of global symmetries, then, by definition,
the infinitesimal transformations (\ref{inftrans1}) leaving $S$ invariant, can, to first order, be expressed according to
\begin{equation}\label{infvar1}
\delta x^{\mu} \; := \; X^{\mu}_{\mspace{12mu} \rho}(x,\psi) \epsilon^{\rho}   \qquad \quad   \delta \psi^i (x) \; := \; \Psi^i_{\mspace{8mu} \rho}(x,\psi) \epsilon^{\rho} 
\end{equation}
for certain functions $X^{\mu}_{\mspace{12mu} \rho}$, $\Psi^i_{\mspace{8mu} \rho}$ of $x$ and $\psi$ and for a certain,
finite number, $r$, of spacetime independent, arbitrary infinitesimal parameters, $\epsilon_{\rho}$.
In what follows, a generic, $r$-dimensional group of global symmetries will be denoted as $G_r$.
As is well known, the physical interactions that are presently regarded as fundamental are all described by actions that
exhibit a more general kind of invariance, involving arbitrary functions of the spacetime coordinates, rather than
spacetime independent parameters. That is, the actions in question are known to be invariant with respect to certain groups
of local symmetries and for many cases of practical interest the infinitesimal transformations (\ref{inftrans1}) leaving
$S$ invariant can be expressed as
\begin{equation}\label{infvar2}
\delta x^{\mu} \; := \; \bar{X}^{\mu}_{\mspace{12mu} \rho} \xi^{\rho}   \qquad \quad   \delta \psi^i \; := \; ^0 \Psi^i_{\mspace{8mu} \rho} \xi^{\rho} \: + \: ^1 \Psi^{i \mu}_{\mspace{16mu} \rho} \partial_{\mu} \xi^{\rho} 
\end{equation}
for a certain finite number, $r'$, of smooth, but further arbitrary, infinitesimal parameter functions, $\xi^{\rho}$, 
of the spacetime coordinates (in the same way as before, the coefficients $\bar{X}^{\mu}_{\mspace{12mu} \rho}$, $^0 \Psi^i_{\mspace{8mu} \rho}$, 
$^1 \Psi^{i \mu}_{\mspace{16mu} \rho}$, depend on both the spacetime coordinates and the field variables\footnote{There
is again no fundamental significance in considering the variations $\delta \psi^i$
to only depend on the $\xi^{\rho}$ and their first derivatives. See e.g. \citeN{AndBer} for a discussion of the general case.}).
In what follows, a generic group of local symmetries, involving $r'$ smooth parameter functions, will
be denoted as $G_{\infty r'}$. A subgroup of any local symmetry group that contains only global symmetries will be referred to as 
a \emph{global subgroup} of the local symmetry goup. The foregoing definitions naturally lead to the following proposition.
\begin{prop}
Let $G_{\infty r'}$ denote a local symmetry group for a classical field theory defined on Minkowski spacetime. Then
$G_{\infty r'}$ has a $r'$-dimensional global subgroup, $G_{r'}$.
\end{prop}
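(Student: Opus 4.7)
The plan is to exhibit $G_{r'}$ explicitly as the subgroup obtained by specializing the parameter functions $\xi^{\rho}$ to be constant, and then to verify that this specialization actually satisfies the definition of a global symmetry group.

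First, I would consider the infinitesimal transformations (\ref{infvar2}) which generate $G_{\infty r'}$ and restrict attention to the case in which each smooth function $\xi^{\rho}(x)$ is taken to be a spacetime constant, i.e.\ $\xi^{\rho}(x) \equiv \epsilon^{\rho}$ with $\epsilon^{\rho} \in \mathbb{R}$. Then $\partial_{\mu}\xi^{\rho} \equiv 0$, so the transformations reduce to
\begin{equation*}
\delta x^{\mu} \; = \; \bar{X}^{\mu}_{\mspace{12mu}\rho}\, \epsilon^{\rho}, \qquad \delta \psi^{i} \; = \; {}^{0}\Psi^{i}_{\mspace{8mu}\rho}\, \epsilon^{\rho},
\end{equation*}
which is exactly of the generic form (\ref{infvar1}) with $r = r'$ and coefficient functions $X^{\mu}_{\mspace{12mu}\rho} = \bar{X}^{\mu}_{\mspace{12mu}\rho}$, $\Psi^{i}_{\mspace{8mu}\rho} = {}^{0}\Psi^{i}_{\mspace{8mu}\rho}$. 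Since these transformations arise from (\ref{infvar2}) by a particular admissible choice of parameters, they automatically leave $S$ invariant, so they are bona fide symmetry transformations in the sense of Definition~\ref{defsymm}.

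Next, I would check that the resulting set of finite transformations forms a subgroup of $G_{\infty r'}$. The identity corresponds to $\epsilon^{\rho} = 0$; closure under composition and inversion follows because the Lie bracket of two such infinitesimal generators must again be of the form (\ref{infvar2}) with \emph{constant} parameters (the bracket of two constants being constant, and not involving derivatives of $\xi$), and hence exponentiates into a finite transformation of the same restricted type. This yields a Lie subgroup of dimension $r'$, since the parameters $\epsilon^{\rho}$ range independently over $\mathbb{R}^{r'}$ and the generators $(\bar{X}^{\mu}_{\mspace{12mu}\rho},\, {}^{0}\Psi^{i}_{\mspace{8mu}\rho})$ corresponding to distinct $\rho$ are independent by hypothesis on $G_{\infty r'}$.

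Finally, I would verify the ``globally nontrivial'' clause in the definition of global symmetries: for any nonzero constant parameter $\epsilon^{\rho}$, the functions $\bar{X}^{\mu}_{\mspace{12mu}\rho}\epsilon^{\rho}$ and ${}^{0}\Psi^{i}_{\mspace{8mu}\rho}\epsilon^{\rho}$ do not vanish outside any compact region (precisely because they are built from a \emph{constant}, not a compactly supported, parameter), so the transformation is not generated by a vector field of compact support. Hence the subgroup meets both conditions in the definition of $G_{r'}$. The main subtlety I anticipate is the closure argument at the level of \emph{finite} transformations rather than infinitesimal ones: one must be sure that the commutator structure of $G_{\infty r'}$ does not mix ``constant'' generators with ``derivative'' terms in a way that forces nontrivial $\partial_{\mu}\xi^{\rho}$ to reappear. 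This is handled by noting that the structure functions of the local symmetry algebra, when evaluated on constant parameter functions, yield again constant parameter functions, so the constant-parameter sector is indeed a Lie subalgebra and its exponentiation defines the desired $r'$-dimensional global subgroup $G_{r'}$.
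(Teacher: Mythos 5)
Your proposal is correct and follows essentially the same route as the paper, which justifies the proposition simply by the observation that one may take the infinitesimal parameters $\xi^{\rho}$ in (\ref{infvar2}) to be spacetime constants, thereby recovering transformations of the global form (\ref{infvar1}) with $r=r'$. Your additional checks (closure of the constant-parameter sector and the global non-triviality clause) merely elaborate on what the paper treats as immediate, so there is no substantive difference in approach.
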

\noindent Although this may seem a trivial observation (i.e. simply take the infinitesimal parameters in (\ref{infvar2}) to
be constants), it is explicitly stated here, as claims to the opposite have been made in the recent literature\footnote{For a detailed discussion of this issue, see \citeN{Holman}.}.
More generally, a local symmetry group $G_{\infty r'}$ may have global subgroups larger than $G_{r'}$. For instance,
in the case where $G_{\infty r'}$ is the group, $\mbox{Diff}(\real{4})$, consisting of the diffeomorphisms of Minkowski 
spacetime, the largest global subgroup is known to be the $24$-dimensional group of fractional linear transformations
of $\real{4}$ into itself\footnote{Strictly speaking this of course requires a suitable extension of Minkowski spacetime
by adding some points at infinity to it. There are standard, mathematically well defined constructions to achieve this,
but which will be left implicit in what follows.}. 

\subsection{Generalized Noether Theorem for Global Invariances of the Action}

\noindent Let it be known that the action is invariant under a $r$-parameter Lie group, $G_r$, so that the infinitesimal
transformations that leave invariant $S$ can be expressed in the form (\ref{infvar1}). Then it is easily verified that the
expression (\ref{NoetherVP}) for $\delta S$ can be put into the form
\begin{equation}\label{varaction1}
\delta S \; = \; \int_{\Omega} \! d^4x \, \left( \partial_{\mu} J^{\mu}_{\mspace{12mu} \rho} \: + \: E_i(\psi) \gamma^i_{\mspace{12mu} \rho}
\: - \: T_{\mu \nu} \partial^{\mu} X^{\nu}_{\mspace{12mu} \rho} \right) \epsilon^{\rho}
\end{equation}
where the $r$ currents, $J^{\mu}_{\mspace{12mu} \rho}$, are given by
\begin{equation}\label{Noethercurrent}     
J^{\mu}_{\mspace{12mu} \rho} \; := \; \frac{\partial \mathscr{L}}{\partial (\partial_{\mu}\psi^i)} \Psi^i_{\mspace{8mu} \rho} \: + \: \Theta^{\mu}_{\mspace{12mu} \nu}X^{\nu}_{\mspace{12mu} \rho}  \qquad \quad \rho = 1, \cdots , r \; ,
\end{equation}
the $16$ quantities $T_{\mu \nu}$ are defined according to
\begin{equation}\label{stressenergytensor1}
T_{\mu \nu} \; := \; \mathscr{L} \eta_{\mu \nu}  - 2 \frac{\partial \mathscr{L}}{\partial \eta^{\mu \nu}} \qquad \quad \mu , \nu  = 0 ,1 ,2 ,3
\end{equation}
and where the coefficients $\gamma^i_{\mspace{12mu} \rho}$ are defined by $\gamma^i_{\mspace{12mu} \rho} := \Psi^i_{\mspace{8mu} \rho} \: - \: (\partial_{\nu} \psi^i)) \bar{X}^{\nu}_{\mspace{12mu} \rho}$.
Since this expression for $\delta S$ vanishes for arbitrary infinitesimal $\epsilon^{\rho}$, one thus arrives at the
following theorem, which is a generalized version of Noether's first theorem for the case of Minkowski spacetime
\begin{thm}\label{Ntheorem1}
If the action, $S$, is invariant with respect to a Lie group, $G_r$, of global symmetry transformations of the Minkowski spacetime 
coordinates and/or the fields, there are $r$ divergence relations of the form
\begin{equation}\label{Noetherthm1}
\partial_{\mu} J^{\mu}_{\mspace{12mu} \rho} \: + \: E_i (\psi) \gamma^i_{\mspace{12mu} \rho}  \: - \: T_{\mu \nu} \, \partial^{\mu} X^{\nu}_{\mspace{12mu} \rho} \; = \; 0
\end{equation}
\end{thm}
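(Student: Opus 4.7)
The plan is to derive Theorem \ref{Ntheorem1} as a direct specialization of Lemma \ref{NoetherVPlem}: since the solution to the variational problem is already at hand in the form (\ref{NoetherVP1}), all that remains in the global case is to substitute the ansatz (\ref{infvar1}), reorganize the pieces, and exploit the arbitrariness of the $r$ spacetime-independent parameters $\epsilon^\rho$.

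First I would substitute $\delta x^\mu = X^\mu_{\ \rho}\epsilon^\rho$ and $\delta \psi^i = \Psi^i_{\ \rho}\epsilon^\rho$ into (\ref{NoetherVP1}). Because the $\epsilon^\rho$ are constants, they pass through the divergence, so the first term collapses to $(\partial_\mu J^\mu_{\ \rho})\epsilon^\rho$ with $J^\mu_{\ \rho}$ defined by (\ref{Noethercurrent}). Equation (\ref{inftrans2}) gives $\bar{\delta}\psi^i = (\Psi^i_{\ \rho} - (\partial_\nu\psi^i) X^\nu_{\ \rho})\epsilon^\rho = \gamma^i_{\ \rho}\epsilon^\rho$, so the Euler--Lagrange piece becomes $E_i(\psi)\gamma^i_{\ \rho}\epsilon^\rho$.

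Next I would handle the bracketed nondynamical piece $\mathscr{L}\partial_\mu\delta x^\mu - (\partial\mathscr{L}/\partial\eta^{\mu\nu})\delta\eta^{\mu\nu}$, from which the symmetric tensor $T_{\mu\nu}$ must emerge. Because $\eta^{\mu\nu}$ transforms as a $(2,0)$-tensor under the infinitesimal coordinate change $x\to x+\delta x$, a direct first-order calculation yields $\delta\eta^{\mu\nu} = \partial^\mu\delta x^\nu + \partial^\nu\delta x^\mu$. Using the $\mu\leftrightarrow\nu$ symmetry of $\partial\mathscr{L}/\partial\eta^{\mu\nu}$, together with the identity $\partial_\mu X^\mu_{\ \rho} = \eta_{\mu\nu}\partial^\mu X^\nu_{\ \rho}$, the bracketed expression reduces to $\bigl(\mathscr{L}\eta_{\mu\nu} - 2\,\partial\mathscr{L}/\partial\eta^{\mu\nu}\bigr)\partial^\mu X^\nu_{\ \rho}\,\epsilon^\rho = T_{\mu\nu}\partial^\mu X^\nu_{\ \rho}\,\epsilon^\rho$, with $T_{\mu\nu}$ as in (\ref{stressenergytensor1}). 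Assembling the three pieces, (\ref{NoetherVP1}) becomes $(\partial_\mu J^\mu_{\ \rho} + E_i\gamma^i_{\ \rho} - T_{\mu\nu}\partial^\mu X^\nu_{\ \rho})\epsilon^\rho = 0$, which, by arbitrariness of the $\epsilon^\rho$, yields the $r$ identities (\ref{Noetherthm1}), one for each value of $\rho$.

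I do not anticipate any real obstacle here, the substantive computation having been discharged in establishing Lemma \ref{NoetherVPlem}. The only point requiring care is the symbolic status of $\partial\mathscr{L}/\partial\eta^{\mu\nu}$ flagged in the footnote to (\ref{canonicalSEtensor}): this derivative is only defined for variations of $\eta$ within the space of flat metrics. Fortunately, the variation $\delta\eta^{\mu\nu} = \partial^\mu\delta x^\nu + \partial^\nu\delta x^\mu$ produced above is precisely the component change of the same flat Minkowski metric under a new coordinate chart, so the derivative is applied consistently and the argument goes through without qualification.
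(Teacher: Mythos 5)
Your proposal is correct and follows essentially the same route as the paper: the paper likewise substitutes the global ansatz (\ref{infvar1}) into the general variation (\ref{NoetherVP}), computes $\delta\eta^{\mu\nu}=\partial^{\mu}\delta x^{\nu}+\partial^{\nu}\delta x^{\mu}$ to assemble $T_{\mu\nu}\partial^{\mu}X^{\nu}_{\mspace{12mu}\rho}$, and concludes from the arbitrariness of the constant parameters $\epsilon^{\rho}$ (and of $\Omega$). Your only deviation is to specialize the pointwise condition of Lemma \ref{NoetherVPlem} rather than the integral expression (\ref{varaction1}), which is an immaterial reordering of the same argument.
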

\noindent If attention is restricted to the subgroup of $G_r$ consisting of those transformations for which the coordinate
transformations, $x^{\mu} \rightarrow x^{\mu} + \delta x^{\mu}$, are isometries of Minkowski spacetime, the last term in 
eq. (\ref{Noetherthm1}), which is non-standard, vanishes, and Noether's original result for global symmetries is obtained.
For further discussion of the precise relation between theorem \ref{Ntheorem1} and Noether's original first theorem, see 
section \ref{comparison}.

\subsection{Generalized Noether Theorems for Local Invariances of the Action}

\noindent Let it be known that the action is invariant under an infinite-dimensional Lie group, $G_{\infty r'}$, so that the 
infinitesimal transformations that leave invariant $S$ can be expressed in the form (\ref{infvar2}). Then it is easily verified that the
expression (\ref{NoetherVP}) for $\delta S$ can be put into the form
\begin{equation}\label{varaction2}
\delta S \; = \; \int_{\Omega} \! d^4x \, \left( \partial_{\mu} B^{\mu} \: + \: \Xi_{\rho} \xi^{\rho} \right)
\end{equation}
where the ``currents'', $B^{\mu}$, are defined by
\begin{eqnarray}\label{Bcurrent}
B^{\mu} & := &  \frac{\partial \mathscr{L}}{\partial (\partial_{\mu} \psi^i)} \delta \psi^i \: + \: \Theta^{\mu}_{\mspace{12mu} \nu} \delta x^{\nu} \: + \: E_i(\psi) ^1 \Psi^{i \mu}_{\mspace{16mu} \rho} \xi^{\rho}  \: - \: \mathscr{L} \delta x^{\mu} \: + \: 2 \eta^{\alpha \mu} \frac{\partial \mathscr{L}}{\partial \eta^{\alpha \nu}} \delta x^{\nu}   \nonumber \\
	& = & \frac{\partial \mathscr{L}}{\partial (\partial_{\mu} \psi^i)} \bar{\delta} \psi^i \: + \: E_i(\psi) ^1 \Psi^{i \mu}_{\mspace{16mu} \rho} \xi^{\rho}  \: + \: 2 \eta^{\alpha \mu} \frac{\partial \mathscr{L}}{\partial \eta^{\alpha \nu}} \delta x^{\nu}   
\end{eqnarray}
and where the $r'$ quantities $\Xi_{\rho}$ are defined by
\begin{equation}\label{noetherlocal1}
\Xi_{\rho} \; := \; \bigl\{ E_i (\psi) ( ^0 \Psi^i_{\mspace{8mu} \rho} \: - \: \bar{X}^{\mu}_{\mspace{12mu} \rho}(\partial_{\mu} \psi^i)) \: - \: \partial_{\mu} (E_i (\psi) ^1 \Psi^{i \mu}_{\mspace{16mu} \rho}) \bigr\} \: + \: (\partial^{\mu}T_{\mu \nu}) \bar{X}^{\nu}_{\mspace{12mu} \rho}
\end{equation}
Since this expression for $\delta S$ vanishes for arbitrary infinitesimal parameter functions $\xi^{\rho}$, it vanishes also
for the specific such functions that vanish and have vanishing derivatives on $\partial \Omega$. Hence, for such variations
$\Xi_{\rho} = 0$, as the variations are further arbitrary within $\Omega$. But the expressions $\Xi_{\rho}$ are independent
of the variations and hence, the following generalized form of Noether's second theorem in Minkowski spacetime is obtained
\begin{thm}\label{Ntheorem2}
If the action, $S$, is invariant with respect to a Lie group, $G_{\infty r'}$, of local symmetry transformations of the Minkowski
spacetime coordinates and/or the fields, the following identities obtain, \emph{independent} of any field equations
\begin{equation}\label{Noetherthm2}
\Xi_{\rho} \; = \; 0 \qquad \quad \rho = 1 , \cdots , r'
\end{equation}
with the $\Xi_{\rho}$ given by eq. (\ref{noetherlocal1}).
\end{thm}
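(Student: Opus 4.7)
The plan is to obtain the identities (\ref{Noetherthm2}) as a direct consequence of Lemma \ref{NoetherVPlem} by specializing to local symmetries and then invoking a standard localization argument on the arbitrary parameter functions $\xi^\rho$.

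First I would substitute the explicit form (\ref{infvar2}) of the infinitesimal local symmetry transformations into the generic expression (\ref{NoetherVP}) for $\delta S$. Concretely, I would write $\delta x^\mu = \bar{X}^\mu_{\mspace{12mu}\rho}\xi^\rho$ and $\delta \psi^i = {}^0\Psi^i_{\mspace{8mu}\rho}\xi^\rho + {}^1\Psi^{i\mu}_{\mspace{16mu}\rho}\partial_\mu \xi^\rho$, and substitute $\bar{\delta}\psi^i = \delta \psi^i - (\partial_\mu \psi^i)\delta x^\mu$ from (\ref{inftrans2}). The metric variation $\delta \eta^{\mu\nu}$ is then fixed by the coordinate change alone (since $\eta$ is non-dynamical), which at first order reads $\delta \eta^{\mu\nu} = \partial^\mu \delta x^\nu + \partial^\nu \delta x^\mu$. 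Using the symmetry of $\partial\mathscr{L}/\partial\eta^{\mu\nu}$ in $\mu,\nu$ and the identity $\mathscr{L}\partial_\mu \delta x^\mu = \mathscr{L}\eta_{\mu\nu}\partial^\mu \delta x^\nu$, the parenthesized block in (\ref{NoetherVP}) collapses to $T_{\mu\nu}\partial^\mu \delta x^\nu$ by the very definition (\ref{stressenergytensor1}) of $T_{\mu\nu}$.

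The next step is to isolate $\xi^\rho$ (rather than its derivatives) in the integrand, so as to put $\delta S$ in the form (\ref{varaction2}). Two integrations by parts are required: one moves the derivative off $\xi^\rho$ in $E_i(\psi)\,{}^1\Psi^{i\mu}_{\mspace{16mu}\rho}\partial_\mu \xi^\rho$, producing the total-divergence contribution $\partial_\mu (E_i\,{}^1\Psi^{i\mu}_{\mspace{16mu}\rho}\xi^\rho)$ to $B^\mu$ together with the bulk coefficient $-\partial_\mu(E_i\,{}^1\Psi^{i\mu}_{\mspace{16mu}\rho})$ appearing in $\Xi_\rho$; the other, applied to $-T_{\mu\nu}\partial^\mu(\bar{X}^\nu_{\mspace{12mu}\rho}\xi^\rho)$, produces the bulk coefficient $+(\partial^\mu T_{\mu\nu})\bar{X}^\nu_{\mspace{12mu}\rho}$ in $\Xi_\rho$ and a boundary piece $-\eta^{\alpha\mu}T_{\alpha\nu}\delta x^\nu$ which, upon expanding $T_{\alpha\nu}$ by (\ref{stressenergytensor1}), recovers precisely the $-\mathscr{L}\delta x^\mu + 2\eta^{\alpha\mu}(\partial\mathscr{L}/\partial\eta^{\alpha\nu})\delta x^\nu$ contributions to $B^\mu$ in (\ref{Bcurrent}). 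After collecting, the integrand is exactly $\partial_\mu B^\mu + \Xi_\rho \xi^\rho$ with $B^\mu$ and $\Xi_\rho$ as defined in (\ref{Bcurrent}) and (\ref{noetherlocal1}).

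Finally I would invoke the symmetry hypothesis to conclude. Since $S$ is invariant, $\delta S = 0$ for \emph{every} admissible parameter function $\xi^\rho$; in particular I may restrict to $\xi^\rho$ of compact support inside $\Omega$ whose first derivatives also vanish on $\partial \Omega$. For such variations the integral of $\partial_\mu B^\mu$ over $\Omega$ is zero by Stokes' theorem, leaving $\int_\Omega \Xi_\rho \xi^\rho\,d^4x = 0$. Because the $\Xi_\rho$ are independent of the $\xi^\rho$ and the compact region $\Omega$ is otherwise arbitrary, the fundamental lemma of the calculus of variations gives $\Xi_\rho(x) = 0$ pointwise, for each $\rho = 1,\ldots,r'$. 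No equations of motion are used anywhere, so the identities hold off shell, as claimed. The principal obstacle is the algebraic step in which the metric-variation block $\mathscr{L}\partial_\mu \delta x^\mu - (\partial\mathscr{L}/\partial\eta^{\mu\nu})\delta\eta^{\mu\nu}$ is reassembled into $T_{\mu\nu}\partial^\mu\delta x^\nu$ and integrated by parts; it is precisely this step, absent from the standard treatments that assume all dependent variables are dynamical, which generates the novel $(\partial^\mu T_{\mu\nu})\bar{X}^\nu_{\mspace{12mu}\rho}$ contribution to $\Xi_\rho$.
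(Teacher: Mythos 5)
Your proposal is correct and follows essentially the same route as the paper: you substitute the local-symmetry form (\ref{infvar2}) into (\ref{NoetherVP}), regroup the metric-variation block into $T_{\mu\nu}\partial^{\mu}\delta x^{\nu}$ and integrate by parts to reach exactly the decomposition (\ref{varaction2}) with $B^{\mu}$ and $\Xi_{\rho}$ as in (\ref{Bcurrent}) and (\ref{noetherlocal1}) (details the paper leaves as ``easily verified''), and then localize with parameter functions $\xi^{\rho}$ vanishing together with their first derivatives on $\partial\Omega$ to conclude $\Xi_{\rho}=0$ identically, independent of any field equations. This is precisely the argument given in the paper.
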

\noindent Again, the last term in the expression (\ref{noetherlocal1}) for the $\Xi_{\rho}$ is a non-standard term.
The original version of Noether's second theorem is often described as stating the existence of dependencies among
the Lagrange expressions - so-called generalized Bianchi identities - as a direct consequence of the gauge symmetry of the
action and independent of any field equations. It is thus seen that, in the case of Minkowski spacetime, the dependencies among the 
Lagrange expressions in general include an inhomogeneous term. Further discussion of the relation between theorem 
\ref{Ntheorem2} and the usual form of the second Noether theorem can be found in section \ref{comparison}.\\
Because of the identities (\ref{Noetherthm2}), the first term in eq. (\ref{varaction2}) must
vanish for arbitrary variations, which can only happen if each coefficient of the infinitesimal parameter functions and their
spacetime derivatives vanishes separately. It is straightforward to check that this implies the following result.
\begin{thm}\label{Ntheorem3}
If the action, $S$, is invariant with respect to a Lie group, $G_{\infty r'}$, of local symmetry transformations of the Minkowski
spacetime coordinates and/or the fields, the following $11r'$ identities obtain, \emph{independent} of any field equations
\begin{eqnarray}
\partial_{\mu} \mathscr{J}^{\mu}_{\mspace{12mu} \rho} & = & 0                                                                                                                                                                     \label{utiya1}  \\
\mathscr{J}^{\mu}_{\mspace{12mu} \rho} \: + \: \partial_{\nu} G^{\mu \nu}_{\mspace{18mu} \rho}  & = & 0                                                  \label{utiya2}  \\
G^{\mu \nu}_{\mspace{18mu} \rho} \: + \:  G^{\nu \mu}_{\mspace{18mu} \rho} & = & 0    \label{utiya3}
\end{eqnarray}
for $\rho = 1 , \cdots , r'$, $\mu , \nu = 0 , \cdots , 3$, where the $\mathscr{J}^{\mu}_{\mspace{12mu} \rho}$
denote the ``improper currents''
\begin{eqnarray}
\mathscr{J}^{\mu}_{\mspace{12mu} \rho} & := & E_i (\psi) ^1 \Psi^{i \mu}_{\mspace{16mu} \rho} \: + \: \Theta^{\mu}_{\mspace{12mu} \nu} \bar{X}^{\nu}_{\mspace{12mu} \rho} \: + \:  ^0 \Psi^i_{\mspace{8mu} \rho} \frac{\partial \mathscr{L}}{\partial (\partial_{\mu}\psi^i)}  \: - \: \mathscr{L} \bar{X}^{\mu}_{\mspace{12mu} \rho} \: + \: 2 \eta^{\alpha \mu} \frac{\partial \mathscr{L}}{\partial \eta^{\alpha \nu}} \bar{X}^{\nu}_{\mspace{12mu} \rho}  \nonumber \\
                     & = &  J^{\mu}_{\mspace{12mu} \rho} \: + \: E_i (\psi) ^1 \Psi^{i \mu}_{\mspace{16mu} \rho} \: - \: T^{\mu}_{\mspace{12mu} \nu} \bar{X}^{\nu}_{\mspace{12mu} \rho} \label{current1}
\end{eqnarray}
the $G^{\mu \nu}_{\mspace{18mu} \rho}$ are defined by
\begin{equation}\label{intcurrent1}
G^{\mu \nu}_{\mspace{18mu} \rho} \; := \; \, ^1 \Psi^{i \mu}_{\mspace{16mu} \rho} \frac{\partial \mathscr{L}}{\partial (\partial_{\nu} \psi^i)}   
\end{equation}
and the $J^{\mu}_{\mspace{12mu} \rho}$ denote the Noether currents (\ref{Noethercurrent}), corresponding to the
subgroup of global symmetries associated with the general symmetry variations (\ref{infvar2}) (i.e. with $X^{\nu}_{\mspace{12mu} \rho}$
and $\Psi^i_{\mspace{8mu} \rho}$ replaced by respectively $\bar{X}^{\nu}_{\mspace{12mu} \rho}$ and $^0 \Psi^i_{\mspace{8mu} \rho}$).
\end{thm}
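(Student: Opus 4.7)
The plan is to exploit the combination of Theorem \ref{Ntheorem2} with the integrand appearing in equation (\ref{varaction2}). Since $G_{\infty r'}$ is a symmetry group, $\delta S = 0$ for arbitrary smooth parameter functions $\xi^{\rho}$ and arbitrary compact $\Omega$, so the integrand $\partial_\mu B^\mu + \Xi_\rho \xi^\rho$ itself must vanish pointwise; Theorem \ref{Ntheorem2} kills the second piece, leaving $\partial_\mu B^\mu = 0$ identically. The strategy is then to rewrite $B^\mu$ as an explicit polynomial in $\xi^\rho$ and $\partial_\nu \xi^\rho$, differentiate, and read off the three identities by matching the jets of $\xi^\rho$.

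First I would substitute the local-symmetry ansatz (\ref{infvar2}) into the second line of (\ref{Bcurrent}). Using $\bar\delta \psi^i = {}^0\Psi^i{}_\rho \xi^\rho + {}^1\Psi^{i\nu}{}_\rho \partial_\nu \xi^\rho - (\partial_\nu \psi^i)\bar{X}^\nu{}_\rho \xi^\rho$ together with the identity $-\frac{\partial \mathscr{L}}{\partial(\partial_\mu \psi^i)} \partial_\nu \psi^i = \Theta^\mu{}_\nu - \mathscr{L}\delta^\mu{}_\nu$ coming from (\ref{canonicalSEtensor}), the coefficient of $\xi^\rho$ in $B^\mu$ collects precisely into the improper current $\mathscr{J}^\mu{}_\rho$ of (\ref{current1}), while the coefficient of $\partial_\nu \xi^\rho$ collects into $G^{\nu\mu}{}_\rho$ as defined in (\ref{intcurrent1}). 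The upshot is the compact representation
\begin{equation*}
B^\mu \; = \; \mathscr{J}^\mu{}_\rho \, \xi^\rho \: + \: G^{\nu\mu}{}_\rho \, \partial_\nu \xi^\rho .
\end{equation*}

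Next I would differentiate and reorganize the result by the order of the derivatives of $\xi^\rho$. A direct computation, with dummy-index relabelling in the degree-one term, yields
\begin{equation*}
\partial_\mu B^\mu \; = \; (\partial_\mu \mathscr{J}^\mu{}_\rho)\,\xi^\rho \: + \: \bigl[ \mathscr{J}^\nu{}_\rho + \partial_\mu G^{\nu\mu}{}_\rho \bigr] \partial_\nu \xi^\rho \: + \: G^{\nu\mu}{}_\rho \, \partial_\mu \partial_\nu \xi^\rho ,
\end{equation*}
and because $\partial_\mu \partial_\nu \xi^\rho$ is symmetric in $(\mu,\nu)$ the last term can be replaced by $\tfrac{1}{2}(G^{\nu\mu}{}_\rho + G^{\mu\nu}{}_\rho)\,\partial_\mu\partial_\nu \xi^\rho$. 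At any point $x_0$ one can prescribe $\xi^\rho(x_0)$, $\partial_\nu \xi^\rho(x_0)$, and the symmetric part of $\partial_\mu \partial_\nu \xi^\rho(x_0)$ independently by an appropriate choice of smooth $\xi^\rho$; hence the vanishing of $\partial_\mu B^\mu$ at every point forces each of the three coefficients to vanish separately, producing (\ref{utiya1}), (\ref{utiya2}) and (\ref{utiya3}) in turn.

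The main obstacle is entirely algebraic bookkeeping: one must verify that, after substituting (\ref{infvar2}) into (\ref{Bcurrent}) and applying the $\Theta^\mu{}_\nu$ identity, the term $-\mathscr{L}\bar{X}^\mu{}_\rho$ in (\ref{current1}) is correctly generated and the piece $2 \eta^{\alpha \mu} \frac{\partial \mathscr{L}}{\partial \eta^{\alpha\nu}} \bar{X}^\nu{}_\rho$ survives intact, and that the index placement in the $\partial_\nu \xi^\rho$ term matches the convention in the definition (\ref{intcurrent1}) of $G^{\mu\nu}{}_\rho$. Once the representation $B^\mu = \mathscr{J}^\mu{}_\rho \xi^\rho + G^{\nu\mu}{}_\rho \partial_\nu \xi^\rho$ is established, the conclusion is automatic: the theorem is just the statement that the three graded pieces of $\partial_\mu B^\mu$, viewed as a polynomial in the $0$-, $1$- and $2$-jets of $\xi^\rho$ with field-dependent but $\xi$-independent coefficients, must each vanish identically.
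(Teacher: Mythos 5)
Your proposal is correct and follows essentially the same route as the paper: the paper likewise uses Theorem \ref{Ntheorem2} to reduce the vanishing of the integrand in (\ref{varaction2}) to $\partial_\mu B^\mu = 0$ and then reads off the identities from the independent coefficients of $\xi^\rho$ and its derivatives, leaving the bookkeeping as ``straightforward to check.'' Your explicit verification that $B^\mu = \mathscr{J}^{\mu}_{\mspace{12mu}\rho}\,\xi^\rho + G^{\nu \mu}_{\mspace{18mu} \rho}\,\partial_\nu \xi^\rho$, together with the jet-matching argument (including symmetrizing the second-derivative coefficient to obtain (\ref{utiya3})), is exactly the computation the paper omits.
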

\noindent Another way of formulating these results is that, amongst other things, the presence of $r'$ independent local 
symmetries implies the existence of $r'$ currents, which are conserved identically - independent of any field equations.
Before closing this section, it is worth noting that if a definition of symmetries is adopted in which the total change 
of the action, $\Delta S$, under a generic transformation (\ref{inftrans1}) is allowed to be of the form 
$\int_{\real{4}} \! d^4 x \, \partial_{\mu} Q^{\mu}$ (assuming $Q^{\mu}$ to be $C^1$; cf. the remarks in definition \ref{defsymm}),
it is necessary to include the divergence of $Q^{\mu}$ on the left-hand side of eq. (\ref{NoetherVP2}).
It should be noted however that for a \emph{given} transformation (\ref{inftrans1}), $Q^{\mu}$ is a fixed, infinitesimal
vector field. In particular, for an infinitesimal global symmetry of the form (\ref{infvar1}), addition of the fixed vector
fields $Q^{\mu}_{\mspace{10mu} \rho}$ (as defined in the obvious manner) to the Noether currents (\ref{Noethercurrent})
is dictated by the general solution to Noether's variational problem (\ref{NoetherVP2})
and is \emph{not} equivalent to the addition of so-called improvement terms to the Noether currents.
Terms of the latter sort are of the general form $\partial_{\nu} S^{\nu \mu}_{\mspace{20mu} \rho}$, with $S^{\nu \mu}_{\mspace{20mu} \rho}$ 
anti-symmetric in its first two indices. Addition of an improvement term to any Noether current obviously yields a
conserved current again, which, moreover, gives rise to the same conserved charges (under standard assumptions).
As already noted in the introduction however, by allowing Noether currents to be altered arbitrarily through the addition
of improvement terms, the connection between symmetries and conservation laws is obfuscated.
Although in the specific case of the stress-energy tensor - i.e. the Noether ``current'' associated with constant 
spacetime translations - a method will be seen to exist in subsection \ref{SEMtensor} for defining a unique improvement 
term such that the resulting stress-energy tensor is symmetric, this method is valid only for field configurations 
satisfying their field equations and it moreover does not resolve the general interpretational problem of how to 
precisely connect the resulting conserved current with definite spacetime symmetries. As will also be seen in subsection
\ref{SEMtensor} however, it is possible to write down a general, ``off-shell'' expression involving the very same improved 
stress-energy tensor, via a direct application of theorem \ref{Ntheorem3} for the case of general diffeomorphisms.

\section{Comparison With Standard Formulations}\label{comparison}

\noindent Noether's first theorem is usually formulated as the statement that any global invariance of the action generates
a conserved current. Theorems \ref{Ntheorem1} and \ref{Ntheorem2}, when considered together, endorse this 
standard formulation.
\begin{cor}\label{Ncorollary1}
If the action, $S$, is invariant with respect to a Lie group, $G_r$, of global symmetry transformations of the Minkowski spacetime 
coordinates and/or the fields, the $r$ currents, $\bar{J}^{\mu}_{\mspace{12mu} \rho}$, defined by
\begin{equation}\label{Noethercurrent2}     
\bar{J}^{\mu}_{\mspace{12mu} \rho} \; := \; J^{\mu}_{\mspace{12mu} \rho} \: - \: T^{\mu}_{\mspace{12mu} \nu} X^{\nu}_{\mspace{12mu} \rho}
\end{equation}
with $J^{\mu}_{\mspace{12mu} \rho}$ denoting the usual Noether currents (\ref{Noethercurrent}), are conserved,
if all dynamical fields satisfy their field equations within $\Omega$. 
\end{cor}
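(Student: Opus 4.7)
The plan is to combine the divergence relation of Theorem \ref{Ntheorem1} with the auxiliary identity obtained by applying Theorem \ref{Ntheorem2} to the automatic diffeomorphism invariance of a generic action of the form (\ref{actiongeneric}), as established at the start of section \ref{Noetherthms}. Concretely, I would first invoke Theorem \ref{Ntheorem1} under the hypothesised $G_r$-invariance to obtain
\begin{equation*}
\partial_{\mu} J^{\mu}_{\mspace{12mu} \rho} \: + \: E_i(\psi) \gamma^i_{\mspace{12mu} \rho} \: - \: T_{\mu \nu} \partial^{\mu} X^{\nu}_{\mspace{12mu} \rho} \; = \; 0 .
\end{equation*}
On-shell the middle term drops out, leaving $\partial_{\mu} J^{\mu}_{\mspace{12mu} \rho} = T^{\mu}_{\mspace{12mu} \nu} \partial_{\mu} X^{\nu}_{\mspace{12mu} \rho}$. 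The desired conservation $\partial_{\mu} \bar{J}^{\mu}_{\mspace{12mu} \rho} = 0$ is then equivalent to the identity $T^{\mu}_{\mspace{12mu} \nu} \partial_{\mu} X^{\nu}_{\mspace{12mu} \rho} = \partial_{\mu} (T^{\mu}_{\mspace{12mu} \nu} X^{\nu}_{\mspace{12mu} \rho})$, and so, upon expanding the right-hand side by the Leibniz rule, to the on-shell conservation $\partial_{\mu} T^{\mu}_{\mspace{12mu} \nu} = 0$ of the stress-energy tensor (\ref{stressenergytensor1}).

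To supply this missing piece I would apply Theorem \ref{Ntheorem2} to the local symmetry group $\mbox{Diff}(\real{4})$, the invariance of (\ref{actiongeneric}) under which has already been noted in section \ref{Noetherthms}. Parametrising an infinitesimal diffeomorphism by $\delta x^{\mu} = \xi^{\mu}$ identifies $\bar{X}^{\mu}_{\mspace{12mu} \rho} = \delta^{\mu}_{\mspace{12mu} \rho}$, with the coefficients $^0 \Psi^i_{\mspace{8mu} \rho}$, $^1 \Psi^{i \mu}_{\mspace{16mu} \rho}$ determined by the Lie-derivative action on the tensor field $\psi^i$. The identity (\ref{noetherlocal1}) then reduces to
\begin{equation*}
\bigl\{ E_i (\psi) ( ^0 \Psi^i_{\mspace{8mu} \rho} - \partial_{\rho} \psi^i ) - \partial_{\mu} (E_i (\psi) ^1 \Psi^{i \mu}_{\mspace{16mu} \rho}) \bigr\} \: + \: \partial^{\mu} T_{\mu \rho} \; = \; 0 ,
\end{equation*}
so that on-shell $\partial^{\mu} T_{\mu \rho} = 0$. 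Substituting back into the displayed equation from the previous paragraph yields $\partial_{\mu} \bar{J}^{\mu}_{\mspace{12mu} \rho} = 0$ at once.

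The main obstacle is not computational but conceptual. The extra Minkowski-specific term $-T^{\mu}_{\mspace{12mu} \nu} X^{\nu}_{\mspace{12mu} \rho}$ in the definition (\ref{Noethercurrent2}) of the improved current, as well as the on-shell conservation of the total, cannot be produced from $G_r$-invariance alone, because Theorem \ref{Ntheorem1} on its own only delivers a divergence relation contaminated by the non-standard term $T_{\mu \nu} \partial^{\mu} X^{\nu}_{\mspace{12mu} \rho}$. What makes the rescue possible is the recognition that the generic action (\ref{actiongeneric}) automatically enjoys the \emph{additional}, infinite-dimensional local symmetry under $\mbox{Diff}(\real{4})$, so that Theorem \ref{Ntheorem2} can be invoked in parallel with Theorem \ref{Ntheorem1}. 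This is precisely the sense, advertised already in the introduction, in which the classical slogan ``every global invariance yields a conserved current'' survives here only as a joint corollary of \emph{both} generalised Noether theorems.
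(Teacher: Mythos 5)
Your proposal is correct and follows essentially the same route as the paper: rewrite the divergence relation of Theorem \ref{Ntheorem1} via the Leibniz rule so that the non-standard term is absorbed into $\bar{J}^{\mu}_{\mspace{12mu} \rho}$, then apply Theorem \ref{Ntheorem2} to infinitesimal diffeomorphisms ($\delta x^{\mu} = \xi^{\mu}$, $\bar{X}^{\mu}_{\mspace{12mu} \rho} = \delta^{\mu}_{\mspace{12mu} \rho}$) to obtain $\partial^{\mu} T_{\mu \nu} = 0$ on-shell. Your closing remark about the conservation law resting on both generalized theorems matches the paper's own interpretation of this corollary.
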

\begin{proof}
The divergence relations (\ref{Noetherthm1}) can clearly be expressed as
\begin{equation}
\partial_{\mu} \bar{J}^{\mu}_{\mspace{12mu} \rho} \: + \: E_i (\psi) \gamma^i_{\mspace{12mu} \rho}  \: + \: (\partial^{\mu} T_{\mu \nu}) X^{\nu}_{\mspace{12mu} \rho} \; = \; 0
\end{equation}
while application of theorem \ref{Ntheorem2} to the case of infinitesimal diffeomorphisms, $\delta x^{\mu} = \xi^{\mu}$, gives that 
$\partial^{\mu} T_{\mu \nu} = 0$ if the field equations are satisfied (within the region $\Omega$). Hence, the currents 
(\ref{Noethercurrent2}) are conserved if the field equations are satisfied.
\end{proof}
\noindent An interesting observation is now that the validity of the usual formulation of Noether's first theorem
\emph{as a general statement}, and as characterized above, appears to crucially depend on the validity of the generalized second theorem. 
That is, with the exception of a few special cases, it does not in general appear possible to establish the conservation 
of the tensor $T_{ab}$, the components of which are defined by eq. (\ref{stressenergytensor1}), as a consquence of any 
global symmetry of the action.
Thus, the invariance of $S$ under general diffeomorphisms seems indispensible for the validity of the usual formulation of
Noether's first theorem as a general statement (this view will be further substantiated by the results of subsection
\ref{SEMtensor}). Now, it may appear to the reader that one does not really need the first theorem as a general statement in this 
form. As already remarked before, in the case of spacetime isometries, terms involving the components $T_{\mu \nu}$
are absent and corollary \ref{Ncorollary1} reduces to the standard formulation of the first theorem in terms of the
currents $J^{\mu}_{\mspace{12mu} \rho}$, without any need to appeal to the generalized second theorem.
So if there were good reasons to restrict attention to global symmetry transformations of which the ``external parts'' are
isometries, deviations from Noether's original first theorem would never really arise. This point will be addressed in more
detail shortly. First it may be asked whether the condition that a coordinate transformation be an isometry is in fact a \emph{necessary} 
condition for Noether's original theorem for global symmetries to be valid within the context of Minkowski spacetime. 
This question is partially answered by the following lemma
\begin{lem}\label{Noethersubgroups}
Let $\mbox{O}(3,1)$ denote the familiar Lorentz group and let $\mbox{UT}(4;\real{})$ denote the group of linear 
transformations represented by real four-dimensional, \emph{unitriangular} matrices (i.e. real four-dimensional, 
upper-triangular matrices with unit diagonal entries). Then the condition that is both sufficient and necessary for an 
invertible linear transformation, $A$, to generate a Noether current of the form (\ref{Noethercurrent}), independent of 
the particular form of the action, is that $A \in \mbox{O}(3,1) \times \mbox{UT}(4;\real{})$. 
\end{lem}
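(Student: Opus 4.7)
The plan is to translate the symmetry condition of Theorem \ref{Ntheorem1} into a purely algebraic condition on the Lie-algebra generator $B \in \mathfrak{gl}(4;\mathbb{R})$ of the one-parameter family $t \mapsto e^{tB}$, and then to identify the locus of admissible $B$ with $\mathfrak{so}(3,1) \oplus \mathfrak{ut}(4;\mathbb{R})$, whose exponential is $\mathrm{O}(3,1) \cdot \mathrm{UT}(4;\mathbb{R})$.

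First I would substitute $X^\nu_{\mspace{12mu}\rho}(x) = B^\nu_{\mspace{12mu}\alpha}\, x^\alpha$ into the divergence relation (\ref{Noetherthm1}), so that the ``non-standard'' piece becomes the constant contraction $T_{\mu\nu}\,\eta^{\mu\alpha}\, B^\nu_{\mspace{12mu}\alpha}$. Requiring that the current (\ref{Noethercurrent}) give rise to a Noether-type divergence law for every choice of action amounts, by varying the Lagrangian, to this contraction vanishing for an arbitrary symmetric $T_{\mu\nu}$, since (\ref{stressenergytensor1}) allows one to realise essentially any symmetric tensor by choosing $\mathscr{L}$ appropriately. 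Simultaneously, because (\ref{NoetherVP1}) still contains the scalar-density term $\mathscr{L}\,\partial_\mu \delta x^\mu = \mathscr{L}\,\mathrm{tr}(B)$, independence from the form of $\mathscr{L}$ also demands $\mathrm{tr}(B) = 0$ as a separate constraint on the generator.

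Next I would untangle these into two complementary linear conditions on $B$. The first, arising from the symmetric-$T$ contraction, is the Killing-type equation $\eta^{\mu\alpha}\, B^\nu_{\mspace{12mu}\alpha} + \eta^{\nu\alpha}\, B^\mu_{\mspace{12mu}\alpha} = 0$, which by the standard pairing argument characterises precisely the Lorentz Lie algebra $\mathfrak{so}(3,1)$. The remaining trace-free and density-type conditions on the complement are to be shown to cut out exactly $\mathfrak{ut}(4;\mathbb{R})$. I would verify this by exhibiting explicit generators, performing a dimension count ($\dim \mathfrak{so}(3,1) + \dim \mathfrak{ut}(4;\mathbb{R}) = 6 + 6 = 12$), and checking entry by entry that the two subspaces meet trivially: any matrix that is both $\eta$-antisymmetric in the lowered indices and strictly upper triangular must vanish identically. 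Sufficiency for the Lorentz factor is automatic because an isometry leaves $\eta^{\mu\nu}$ invariant and makes all non-standard terms in (\ref{NoetherVP1}) vanish trivially; sufficiency for the upper triangular factor has to be verified by direct insertion into (\ref{NoetherVP1}) and then integrated from the algebra level to the group level via the exponential map.

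The main obstacle I anticipate is establishing the $\mathrm{UT}(4;\mathbb{R})$ factor precisely. Unlike the Lorentz piece, which drops out effortlessly from the slick ``symmetric tensor contracted with antisymmetric matrix gives zero'' argument, the upper triangular piece does not admit such a direct characterisation; its appearance relies on a cancellation between the metric-variation term $\partial \mathscr{L}/\partial \eta^{\mu\nu} \cdot \delta\eta^{\mu\nu}$ and the scalar-density term $\mathscr{L}\,\partial_\mu \delta x^\mu$ in (\ref{NoetherVP1}), a cancellation made possible by the specific trace-nilpotent structure of strictly upper triangular generators together with the Lorentzian signature of $\eta$. Making this cancellation manifest, and at the same time ruling out any admissible generators lying outside $\mathfrak{so}(3,1) \oplus \mathfrak{ut}(4;\mathbb{R})$ (so that both sufficiency and necessity are captured), is the technically delicate step of the argument.
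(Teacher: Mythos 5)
There is a genuine gap, and it sits exactly at the step you defer. Your necessity argument demands that the contraction $T_{\mu\nu}\,\partial^{\mu}X^{\nu}{}_{\rho}\,\epsilon^{\rho}=\mathrm{tr}(TB)$ vanish for an \emph{arbitrary} symmetric $T_{\mu\nu}$; that forces $B_{(\mu\nu)}=0$ after lowering an index with $\eta$, i.e. $B\in\mathfrak{so}(3,1)$ and nothing more. A strictly upper-triangular generator, say the one with only $B^{0}{}_{1}\neq 0$, gives $\mathrm{tr}(TB)=T^{1}{}_{0}$, which is generically nonzero for symmetric $T$, so on your own premises $\mbox{UT}(4;\real{})$ would be \emph{excluded} and the lemma contradicted rather than proved. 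The cancellation you hope for between the scalar-density term $\mathscr{L}\,\partial_{\mu}\delta x^{\mu}$ and the metric-variation term does not exist for such generators: the first vanishes identically ($\mathrm{tr}\,B=0$ for strictly upper-triangular $B$), while the second, $\propto(\partial\mathscr{L}/\partial\eta^{\mu\nu})\,\delta\eta^{\mu\nu}$, is generically nonzero, so there is nothing left to cancel against. The paper's proof gets the unitriangular factor in by a different move: working in a global inertial system it treats $T_{\mu\nu}$ of (\ref{stressenergytensor1}) as diagonal there, so the offending term collapses to $-T_{00}L^{0}{}_{0}+T_{11}L^{1}{}_{1}+T_{22}L^{2}{}_{2}+T_{33}L^{3}{}_{3}$, and genericity in $\mathscr{L}$ then only forces the \emph{diagonal} entries of the generator to vanish, leaving all off-diagonal entries unconstrained --- a far weaker condition than $\eta$-antisymmetry, and the one that admits strictly upper-triangular generators.

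The second missing ingredient is the passage from that zero-diagonal condition to the group named in the lemma. The paper observes that the 12-parameter set of zero-diagonal matrices is not closed under the commutator, hence is not a Lie subalgebra of $\mathfrak{gl}(4;\real{})$ and cannot be exponentiated as it stands; it then moves to the Euclidean section and uses the Gram--Schmidt ($A=UO$) decomposition to split a zero-diagonal element into a strictly upper-triangular part and an antisymmetric part, each a six-dimensional Lie algebra, which exponentiate (after rotating back to Lorentzian signature) to $\mbox{UT}(4;\real{})$ and $\mbox{O}(3,1)$ respectively. Your $6+6=12$ dimension count and trivial-intersection check gesture at the same structure, but without first obtaining the diagonal-only necessary condition and without the non-closure/decomposition argument you have neither the correct necessity statement nor any route to sufficiency for the unitriangular factor.
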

\begin{proof}
By identifying the matrix elements, $L^{\mu}_{\mspace{10mu} \nu}$, of an element $L$ of the Lie algebra, 
$\mathfrak{gl}(4;\real{})$, that give rise to an infinitesimal coordinate transformation of the form $\delta x^{\mu} = L^{\mu}_{\mspace{10mu} \nu} x^{\nu}$,
with the infinitesimal parameters $\epsilon_{\rho}$ in the transformations (\ref{infvar1})
according to the scheme $L^0_{\mspace{10mu} 0} \leftrightarrow \epsilon_1 \, , L^0_{\mspace{10mu} 1} \leftrightarrow \epsilon_2 \, , \, \cdots L^1_{\mspace{10mu} 0} \leftrightarrow \epsilon_5 \, , \, \cdots L^3_{\mspace{10mu} 3} \leftrightarrow \epsilon_{16}$
(i.e. by going from left to right, starting from the top downwards, along the entries of $L^{\mu}_{\mspace{10mu} \nu}$), the coefficients $X^{\mu}_{\mspace{12mu} \rho}$ 
in (\ref{infvar1}) are seen to form a $4 \times 16$-matrix, in which the only nonzero elements are the coordinates $x^{\mu}$ and which form the
first row of the first $4 \times 4$-block, the second row of the second $4 \times 4$-block, and so on.
If the original coordinate system is taken to be a global inertial coordinate system, $T_{\mu \nu}$ in eq. (\ref{stressenergytensor1})
is diagonal and the last term in eq. (\ref{Noetherthm1}) is nonzero only for $\rho = 1, 6, 11, 16$, in which case it equals 
respectively $- T_{00} , T_{11} , T_{22} ,  T_{33}$.  
It thus appears that, for a generic $L \in \mathfrak{gl}(4;\real{})$, the last term in eq. (\ref{Noetherthm1}) is absent only
if $T_{\mu \nu}$ vanishes identically, corresponding to a trivial theory. It is clear however, that if the contraction of
$T_{\mu \nu} \, \partial^{\mu} X^{\nu}_{\mspace{12mu} \rho}$ with the infinitesimal parameters $\epsilon^{\rho}$ in eq. 
(\ref{varaction1}) vanishes, $T_{\mu \nu} \, \partial^{\mu} X^{\nu}_{\mspace{12mu} \rho}$ would be absent from eq. (\ref{Noetherthm1})
as well, and under the previous scheme of identification this is found to be the case if
\begin{eqnarray}
T_{\mu \nu} \, \partial^{\mu} X^{\nu}_{\mspace{12mu} \rho} \epsilon^{\rho} & = & - T_{00} L^0_{\mspace{10mu} 0} \: + \: T_{11} L^1_{\mspace{10mu} 1} \: + \: T_{22} L^2_{\mspace{10mu} 2} \: + \:T_{33} L^3_{\mspace{10mu} 3} \nonumber \\
								   	   & = & \mathscr{L} \, \mbox{Tr} L \: + \: 2 \left( \frac{\partial \mathscr{L}}{\partial \eta^{00}} L^0_{\mspace{10mu} 0} \: - \: \frac{\partial \mathscr{L}}{\partial \eta^{11}} L^1_{\mspace{10mu} 1} \: - \: \frac{\partial \mathscr{L}}{\partial \eta^{22}} L^2_{\mspace{10mu} 2} \: - \: \frac{\partial \mathscr{L}}{\partial \eta^{33}} L^3_{\mspace{10mu} 3} \right) \; = \; 0 \nonumber
\end{eqnarray}
In order for this condition to be generic (i.e. independent of the particular form of $\mathscr{L}$), it is necessary that
$L^0_{\mspace{10mu} 0} = L^1_{\mspace{10mu} 1} = L^2_{\mspace{10mu} 2} = L^3_{\mspace{10mu} 3} =0$. This suggests that there
is a remaining $12$-parameter freedom in choosing a generic, global linear coordinate transformation, such that the associated
currents (\ref{Noethercurrent}) are conserved if the field equations are satisfied, by virtue of eq. (\ref{Noetherthm1}).
However, although the set of all real four-dimensional matrices with zero diagonal entries forms a subspace of 
(the matrix algebra corresponding to) $\mathfrak{gl}(4;\real{})$, it is not closed with
respect to the commutator bracket operation and it therefore does not form a Lie subalgebra of $\mathfrak{gl}(4;\real{})$.\\
To obtain a better understanding of the remaining freedom in choosing $L$ such that the last term is absent from eq. (\ref{Noetherthm1}),
it is convenient  to temporarily consider the expression (\ref{varaction1}) for $\delta S$ on the so-called Euclidean section
of complexified Minkowsi spacetime. Mathematically, this makes no difference for any of the discussion that follows eq. (\ref{varaction1}),
except that the ordinary Minkowski time coordinate, $t := x^{0}$ (in units where $c=1$), is replaced by negative imaginary 
time, $- i \tau$ ($\tau > 0$), so that the complexified Minkowski metric, $\tilde{\eta}_{ab}$, is positive definite on the 
Euclidean section.  As a result of the Gram-Schmidt algorithm, the matrix-representative, $A$, of a generic element of $\mbox{GL}(4;\real{})$,
i.e. the group of invertible linear transformations, can be written in the form $A = U O$, with $U$ an 
upper-triangular matrix with positive diagonal entries and $O$ an orthogonal matrix\footnote{In ordinary Minkowski spacetime,
a generalized Gram-Schmidt orthonormalization
procedure still exists, but as a result of the existence of null-vectors, the general product form for $A$ now becomes
$A = E U O$, with $U$ an upper-triangular matrix with nonvanishing diagonal entries, $O$ a pseudo-orthogonal matrix (i.e. a
Lorentz transformation) and $E$ a product of matrices that are obtained from the identity matrix by performing various
types of ``elementary operations'' (such as the interchanging of two columns, the addition of any constant multiple
of one column to another column, etc.) on it.}. Since the set of upper-triangular matrices with positive diagonal entries forms a Lie
group, as does the set of orthogonal matrices, this means that (the matrix-representative of) a generic element of $\mathfrak{gl}(4;\real{})$ 
can be written as the sum of a generic upper-triangular matrix and an anti-symmetric matrix. In particular, (the matrix-representative of) 
an element $L$ of $\mathfrak{gl}(4;\real{})$ with zero diagonal entries can be written as the sum of an upper-triangular matrix with zero
diagonal entries and an anti-symmetric matrix. As the sets of real four-dimensional upper-triangular matrices with zero
diagonal entries and real four-dimensional anti-symmetric matrices both form real six-dimensional Lie algebras, there
are thus two natural Lie subgroups of $\mbox{GL}(4;\real{})$, which give rise to Noether currents of the usual form 
(\ref{Noethercurrent}) via eq. (\ref{Noetherthm1}), namely the familiar Lorentz group, $\mbox{O}(3,1)$, (upon rotating
all expressions back to ordinary Minkowski spacetime - assuming an analytic dependence on the complex spacetime coordinates)
and the group $\mbox{UT}(4;\real{})$.
\end{proof}
\noindent Specific theories may still have larger global invariance groups with corresponding Noether currents of the usual 
form\footnote{For instance, for certain theories, such as Maxwell theory in ordinary Minkowski spacetime, or Klein-Gordon 
theory in a two-dimensional spacetime, the last term in eq. (\ref{Noetherthm1}) is absent also for \emph{conformal} 
isometries, i.e. diffeomorphisms for which $\delta \eta_{ab} = 2 \omega \eta_{ab}$, with $\omega$ some smooth, infinitesimal
function. It is easily verified, moreover, that for such isometries, absence of the last term in eq. (\ref{Noetherthm1}) 
is equivalent to the condition $T^a_{\mspace{12mu}a} = 0$ and anticipating the results of subsection \ref{SEMtensor}, this corresponds 
to the well known fact that a theory is conformally invariant if and only if its stress-energy tensor is traceless.}. 
Generally speaking however, by the above lemma, the only groups of linear transformations that give rise to Noether currents
of the usual form (\ref{Noethercurrent}), are $\mbox{UT}(4;\real{})$ and $\mbox{O}(3,1)$ (and the products of these groups, evidently). 
Currents generated by the former group appear to lack a clear physical interpretation. In addition, by corollary \ref{Ncorollary1}, global 
subgroups of $\mbox{Diff}(\real{4})$ that are not of one of these two types, still generate Noether currents of the non-standard form
(\ref{Noethercurrent2}). As already mentioned, the largest global subgroup of $\mbox{Diff}(\real{4})$ is the group consisting 
of all fractional linear transformations, defined by
\begin{equation}
x^{\mu} \; \longrightarrow \; \frac{A^{\mu}_{\mspace{10mu} \nu} x^{\nu} \: + \: A^{\mu}_{\mspace{10mu} 4}}{A^4_{\mspace{10mu} \nu} x^{\nu} \: + \: A^4_{\mspace{10mu} 4}} \qquad A \in \mbox{GL}(5; \real{}) \qquad \: \mu = 0, 1, 2, 3
\end{equation}
and it is straightforward to check that this group is naturally isomorphic to the projective linear group in five dimensions,
$\mbox{PGL}(5;\real{}) := \mbox{GL}(5;\real{}) / \real{\ast}$ ($\real{\ast}$ denoting the nonzero real numbers), which is
in turn naturally isomorphic to the group $\mbox{SL}(5;\real{})$, consisting of all five-dimensional linear transformations
of unit determinant. As this latter group is clearly $24$-dimensional, it is thus seen that for a generic field theory
in Minkowski spacetime, at least sixteen independent global external symmetries of the action give rise to
Noether currents of the standard form (\ref{Noethercurrent}), while at least four of the remaining eight symmetries give rise 
to Noether currents of the non-standard form (\ref{Noethercurrent2}).
Furthermore, only $10$ of the total number of $24$ global external symmetries appear to admit a clear physical interpretation.\\
This brings us back to the earlier question of whether attention should be restricted to those external symmetries which are
also isometries. In this regard, an objection that could indeed be raised against the entire foregoing line of reasoning, is that it 
completely ignores the fact that, within the context of special relativity, physical laws are not only generally covariant, 
but also obey the principle of special covariance. That is, it is assumed in special relativity that preferred classes of 
motion exist in spacetime, namely the ``inertial'' (or ``nonaccelerating'') motions, which can be used to set up global
\emph{inertial} coordinate systems. 
As the collection of these preferred coordinate systems is in one-to-one correspondence with the group of isometries
of Minkowski spacetime, i.e. the Poincar\'e group, $\mbox{IO}(3,1)$, a good reason thus indeed exists for assigning a special
status to this global subgroup of $\mbox{Diff}(\real{4})$ in the generic, non-theory-specific case\footnote{Because of 
empirical constraints from particle physics it is necessary to add a slight 
refinement to the special relativistic assumption that the inertial motions have a privileged status. Experiments 
indicate that this status should be assigned only to those inertial motions that give rise to (one particular class of) 
global inertial coordinate systems of the same temporal and spatial orientation and these coordinate systems are in bijective correspondence with 
the restricted Poincar\'e group, $\mbox{IO}(3,1)_+^{\uparrow}$ (consisting of those elements $\Lambda \in \mbox{IO}(3,1)$, 
for which $\det \Lambda = 1$ (``$+$'') and $\Lambda^0_{\mspace{8mu}0} = \sqrt{1 + \sum_i (\Lambda^i_{\mspace{8mu}0})^2} > 0$
(``$\uparrow$'')).}.
Yet, although this is true, the formulation of any field theory in Minkowski spacetime in terms of an action principle
is \emph{entirely oblivious} to this fact. As far as the expression (\ref{action1}) for the action is concerned, all global
coordinate transformations are on a par (recall that the coordinate system in (\ref{action1}) is not necessarily inertial 
and that $\mathscr{L}$ is a scalar density). Even if attention is initially restricted to inertial coordinate systems in (\ref{action1}),
there is no reason to distinguish, for instance, between the subgroups $\mbox{O}(3,1)$ and $\mbox{UT}(4;\real{})$; both 
subgroups yield conserved Noether currents of the generic form (\ref{Noethercurrent}), as the previous paragraphs made clear.\\
One possible inference that could be drawn from these observations, is that they demonstrate a weakness in the action-based
approach to field theories.
Another possibility, one that is advocated here, would be to take the general diffeomorphism invariance of any field theory
action seriously - rather than merely as a theoretical artefact, as seems to be the favoured interpretation in many 
contemporary discussions - and see where this leads. 
In fact, a little reflection reveals that even if the actual, physical spacetime of our universe were globally 
isomorphic to Minkowski spacetime, there would still be a conclusive empirical argument against assigning a privileged
status to the inertial motions in spacetime \emph{only}. Indeed, according to the equivalence principle, all material (test) bodies
fall at the exact same rate in a gravitational field located within an effective vacuum and an observer who is studying
the motions of test bodies inside a freely falling laboratory (treated as another test body, thus ignoring possible tidal
effects) and in the absence of non-gravitational, external forces, would therefore validly classify these motions as inertial.
Stated differently, in a hypothetical world in which the equivalence principle holds, but in which spacetime were globally
Minkowskian, with non-gravitational effects described by the physics of special relativity, there would be two distinct 
classes of preferred motions in spacetime, the inertial motions and the ``free-falling'' motions.\\
If one were to place oneself in the position of a physicist trying to come up with a viable description of gravitational
phenomena within the framework of special relativity in the decade following Einstein's 1905 paper, there would thus be
a powerful incentive to attribute physical significance to at least \emph{some} non-isometric, general coordinate transformations.
Only in a strictly hypothetical world, with all gravitational phenomena
excluded, would it be reasonable to attribute ``physical'' significance to the global subgroup $\mbox{IO}(3,1)$ (or its restricted part)
only and to discard all other symmetries in $\mbox{Diff}(\real{4})$ as ``unphysical''.
Of course, within the context of Minkowski spacetime, the inertial motions correspond exactly to the geodesics of spacetime,
whereas the free-fall trajectories lack such a clear geometrical interpretation. So, from the perspective of a physicist
who was trying to fit in gravitational phenomena into the physics of special relativity prior to 1915, this could already
be taken as a suggestion that spacetime is perhaps only ``locally Minkowskian'' and that gravitational effects are somehow to be
attributed to spacetime structure, with the free-fall trajectories corresponding to the geodesics of a more general spacetime
geometry. It is a well known fact, of course, that this was indeed the path followed by Einstein in formulating his 1915 theory of
general relativity.\\
The foregoing remarks do not by any means establish the existence of empirically relevant elements in $\mbox{PGL}(5;\real{}) \backslash \mbox{IO}(3,1)$.
However, these remarks do establish that there does not seem to be any decisive argument \emph{against} the existence of
such elements either - even thus it also seems hard to imagine which elements of $\mbox{PGL}(5;\real{}) \backslash \mbox{IO}(3,1)$
could possibly have empirical relevance in the general case. Yet, this is simply the position that one appears to
be forced into, by adopting the action-based approach to field theories.
In summary, there appears to be no sound, in-principle argument in favour of restricting attention to those external global symmetries
which are also isometries, and the validity of the usual formulation of Noether's first theorem as a general statement
therefore does seem to crucially depend on theorem \ref{Ntheorem2}, i.e. the generalized second theorem (at any rate, it is
very far from clear how such a dependence could be avoided).
Furthermore, regardless of whether some elements of $\mbox{PGL}(5;\real{}) \backslash \mbox{IO}(3,1)$ turn out to have empirical relevance,
the widespread belief that global symmetries of the action \emph{automatically} have empirical relevance is certainly false.
As will be seen in section \ref{applications}, by putting emphasis on the general diffeomorphism invariance of any field 
theory in Minkowski spacetime, as advocated in the present article, it is possible to also obtain several more constructive consequences.\\
Finally, note that by corollary \ref{Ncorollary1}, the ``improper currents'', (\ref{current1}), can alternatively be expressed
as $\mathscr{J}^{\mu}_{\mspace{12mu} \rho} = \bar{J}^{\mu}_{\mspace{12mu} \rho} \: + \: E_i (\psi) ^1 \Psi^{i \mu}_{\mspace{16mu} \rho}$.
It can then be inferred that, in the situation where only the gauge fields transform with nonzero $ ^1 \Psi^{i \mu}_{\mspace{16mu} \rho}$ 
in (\ref{infvar2}) (this covers all known physically relevant cases), the (non-standard) Noether currents are conserved under weaker 
conditions on the field equations than in the context of corollary \ref{Ncorollary1}. 
This does not mean however that the Noether currents are conserved \emph{regardless} of whether the matter 
fields satisfy their field equations, because the Lagrange expressions for the matter and gauge fields are 
not independent by virtue of the identities (\ref{Noetherthm2}).

\section{Applications}\label{applications}
\subsection{The Stress-Energy Tensor}\label{SEMtensor}

\noindent In the case of the group of diffeomorphisms of Minkowski spacetime, application of theorem \ref{Ntheorem3} 
immediately leads to the following result
\begin{cor}\label{SEtensorcor}
The symmetric tensor, $T_{ab}$, the components of which are defined by eq. (\ref{stressenergytensor1}) can be expressed as
\begin{equation}\label{stressenergytensor2}
T_{ab} \; = \; \Theta_{ab} \: + \: \partial^c G_{acb} \: + \: E_i (\psi) ^1 \Psi^{i}_{\mspace{10mu} ab}
\end{equation}
where $\Theta_{ab}$ denotes the canonical stress-energy tensor defined by eq. (\ref{canonicalSEtensor}) and $G_{acb}$, $^1 \Psi^{i}_{\mspace{10mu} ab}$,
are defined by respectively eqs. (\ref{intcurrent1}) and (\ref{infvar2}), for the case of generic diffeomorphisms.
Expression (\ref{stressenergytensor2}) is valid regardless of the satisfaction of any field equations. If the field equations
are satisfied, $T_{ab}$ is conserved, i.e. $\partial^a T_{ab} = 0$, and (under standard assumptions) gives rise to the same
conserved charges as $\Theta_{ab}$. In particular, $T_{ab}$ is a stress-energy tensor for the field(s) $\psi$.
\end{cor}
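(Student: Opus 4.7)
The plan is to specialize Theorem \ref{Ntheorem3} to the full diffeomorphism group $\mbox{Diff}(\real{4})$, taking the infinitesimal generator to be $\delta x^{\mu} = \xi^{\mu}$. Comparing with the parametrization (\ref{infvar2}) one reads off $\bar{X}^{\mu}_{\mspace{12mu} \rho} = \delta^{\mu}_{\mspace{10mu} \rho}$, and observes that for any genuine tensor matter field the passive transformation of its coordinate-basis components involves only the Jacobian $\partial_{\mu} \xi^{\nu}$, with no bare $\xi^{\rho}$ term; equivalently $\,^0 \Psi^i_{\mspace{8mu} \rho} = 0$, and the entire nontrivial field variation is captured by the coefficients $\,^1 \Psi^{i \mu}_{\mspace{16mu} \rho}$. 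With these identifications the Noether current appearing in Theorem \ref{Ntheorem3} collapses, via (\ref{Noethercurrent}) and the replacement prescription stated there, to $J^{\mu}_{\mspace{12mu} \rho} = \Theta^{\mu}_{\mspace{12mu} \rho}$.

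Next I would combine the identities (\ref{utiya2}) and (\ref{utiya3}) to obtain $\mathscr{J}^{\mu}_{\mspace{12mu} \rho} = - \partial_{\nu} G^{\mu \nu}_{\mspace{18mu} \rho}$, and equate this with the second form of (\ref{current1}), which, under $\bar{X}^{\nu}_{\mspace{12mu} \rho} = \delta^{\nu}_{\mspace{10mu}\rho}$, reads $\mathscr{J}^{\mu}_{\mspace{12mu} \rho} = J^{\mu}_{\mspace{12mu} \rho} + E_i(\psi) \, ^1 \Psi^{i \mu}_{\mspace{16mu} \rho} - T^{\mu}_{\mspace{12mu} \rho}$. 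Solving the resulting component identity for $T^{\mu}_{\mspace{12mu} \rho}$ and translating into abstract indices --- lowering the free index with $\eta$ so that $\partial_{\nu} G^{\mu\nu}_{\mspace{18mu}\rho}$ becomes $\partial^c G_{acb}$ --- yields precisely (\ref{stressenergytensor2}). Since no field equation has been invoked at any point, the identity is ``off-shell'' as claimed.

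The remaining assertions are immediate. Symmetry of $T_{ab}$ is manifest from its defining expression (\ref{stressenergytensor1}), so (\ref{stressenergytensor2}) carries the nontrivial information that its right-hand side is automatically symmetric in $ab$. On-shell, the Lagrange expressions $E_i(\psi)$ kill the middle term, while $\partial^a \partial^c G_{acb} = 0$ by the antisymmetry $G_{acb} = - G_{cab}$ expressed in (\ref{utiya3}); alternatively, one may combine identity (\ref{utiya1}) with the on-shell conservation of $\Theta^{\mu}_{\mspace{12mu}\rho}$ supplied by Theorem \ref{Ntheorem1} applied to the translation subgroup. The agreement of conserved charges with those of $\Theta_{ab}$ is the standard Belinfante-type argument: the correction $\partial^c G_{acb}$ contributes only a total spatial divergence to the charge integral (again by antisymmetry of $G$ in its first two indices), hence only a surface term at spatial infinity, which vanishes under the usual fall-off assumptions. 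The principal obstacle I anticipate is not analytic but notational --- careful tracking of signs and index placements is needed so that the specialization of (\ref{current1}) and (\ref{utiya2}) delivers the improvement with the correct sign, and one must verify once and for all that $\,^0 \Psi^i_{\mspace{8mu}\rho}$ genuinely vanishes for arbitrary tensorial matter content by checking scalars, vectors, covectors and their tensor products.
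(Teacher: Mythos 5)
Your proposal is correct and follows essentially the same route as the paper: specialize Theorem \ref{Ntheorem3} to diffeomorphisms (so $\bar{X}^{\mu}_{\mspace{12mu}\rho}=\delta^{\mu}_{\mspace{12mu}\rho}$, $^0\Psi^i_{\mspace{8mu}\rho}=0$, hence $J^{\mu}_{\mspace{12mu}\rho}=\Theta^{\mu}_{\mspace{12mu}\rho}$), rearrange eq. (\ref{utiya2}) using (\ref{current1}) to get the off-shell identity, and obtain on-shell conservation from the antisymmetry (\ref{utiya3}) together with conservation of $\Theta_{ab}$ (the paper's alternative via (\ref{Noetherthm2}) and your alternative via (\ref{utiya1}) are equivalent in substance).
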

\begin{proof}
Eq. (\ref{stressenergytensor2}) results from a simple re-ordering of terms in eq. (\ref{utiya2}) in the case of diffeomorphisms
and upon using the expression (\ref{current1}) (with $\bar{X}^{\mu}_{\mspace{12mu} \rho} = \delta^{\mu}_{\mspace{12mu} \rho}$,
$^0 \Psi^i_{\mspace{8mu} \rho} = 0$).
The divergence of $\partial^c G_{acb}$ with respect to its first index vanishes identically because of the symmetry properties
of $G_{acb}$, whereas $\Theta_{ab}$ is conserved if the field equations are satisfied. Alternatively, if the field equations 
are satisfied, conservation of $T_{ab}$ is a trivial consequence of the identities (\ref{Noetherthm2}). 
\end{proof}
\noindent As noted in the introduction, a method for altering the canonical stress-energy tensor into a unique, symmetric
tensor that gives rise to the same conserved charges (under standard assumptions), was already presented a long time ago
by Belinfante\footnote{Cf. \citeANP{Belinfante1} \citeyear{Belinfante1,Belinfante2}. Similar results were obtained by
\citeN{Rosenfeld}. A discussion of Belinfante's method from a modern perspective can be found in \citeN{Weinberg}, while
a generalization of the method for actions involving second-order derivatives of the fields has been given by \citeN{BaCaJa}.}.
It is therefore natural to ask how the expression (\ref{stressenergytensor2})
for the stress-energy tensor relates to the Belinfante tensor. The definition of the latter is based on the two fundamental
observations that (i) the various possible fields form irreducible representations of the Poincar\'e group (or representations
up to sign if spinorial tensor fields are also considered) and that (ii) the Lagrangian density, $\mathscr{L}$, transforms
as a scalar under the Poincar\'e group. In particular, under an infinitesimal Lorentz rotation, $x^{\mu} \rightarrow x^{\mu} + \omega^{\mu}_{\mspace{10mu} \nu}x^{\nu}$,
a generic field $\psi^i$ transforms according to\footnote{It is in principle straightforward to derive the form of the
$(J_{\mu \nu})^i_{\mspace{6mu} j}$ from the fact that, to first order, the variation of a generic tensor field
$T^{\alpha_1 \cdots \alpha_k}_{\mspace{55mu} \beta_1 \cdots \beta_l}$
under an infinitesimal Lorentz rotation $x^{\mu} \rightarrow x^{\mu} + \omega^{\mu}_{\mspace{10mu} \nu}x^{\nu}$ is given by
\begin{equation}
\delta T^{\alpha_1 \cdots \alpha_k}_{\mspace{55mu} \beta_1 \cdots \beta_l} \; = \; \sum_{j=1}^k T^{\alpha_1 \cdots \nu \cdots \alpha_k}_{\mspace{80mu} \beta_1 \cdots \beta_l} \, \omega^{\alpha_j}_{\mspace{18mu} \nu} \: - \: \sum_{j=1}^l T^{\alpha_1 \cdots \alpha_k}_{\mspace{55mu} \beta_1 \cdots \nu \cdots \beta_l} \, \omega^{\nu}_{\mspace{10mu} \beta_j}
\end{equation}}
\begin{equation}\label{Lorentztrans}
\psi^i \rightarrow \psi^i \: + \: \frac{1}{2} \omega^{\mu \nu} (J_{\mu \nu})^i_{\mspace{6mu} j} \psi^j
\end{equation}
where the quantities $(J_{\mu \nu})^i_{\mspace{6mu} j}$ form a representation of the homogeneous Lorentz algebra and 
$(J_{\nu \mu})^i_{\mspace{6mu} j} = - (J_{\mu \nu})^i_{\mspace{6mu} j}$, whereas
\begin{eqnarray}
\delta \mathscr{L}    &    =    &  \frac{\partial \mathscr{L}}{\partial \psi^i} \delta \psi^i \: + \: \frac{\partial \mathscr{L}}{\partial (\partial_{\mu} \psi^i)} \delta \partial_{\mu} \psi^i		\nonumber \\
		      &    =    &  \frac{\partial \mathscr{L}}{\partial \psi^i} \delta \psi^i \: + \: \frac{\partial \mathscr{L}}{\partial (\partial_{\mu} \psi^i)} \partial_{\mu} \delta \psi^i \: - \:  \frac{\partial \mathscr{L}}{\partial (\partial_{\mu} \psi^i)} (\partial_{\nu} \psi^i) \partial_{\mu} \delta x^{\nu} \nonumber \\
		      &    =    &  \frac{1}{2} \omega^{\mu \nu} \left\{ \frac{\partial \mathscr{L}}{\partial \psi^i} (J_{\mu \nu})^i_{\mspace{6mu} j} \psi^j \: + \:  \frac{\partial \mathscr{L}}{\partial (\partial_{\rho} \psi^i)}  (J_{\mu \nu})^i_{\mspace{6mu} j}  \partial_{\rho} \psi^j \: - \; (\Theta_{\mu \nu} - \Theta_{\nu \mu}) \right\} \; = \; 0
\end{eqnarray}
for arbitrary infinitesimal, anti-symmetric matrices $\omega^{\mu \nu}$. Using the field equations, $E_i (\psi) = 0$, one
thus sees that the anti-symmetric part of the canonical stress-energy tensor is given by
\begin{equation}\label{AScanonicalSEtensor}
\Theta_{[\mu \nu]} \; := \; \frac{1}{2}(\Theta_{\mu \nu} - \Theta_{\nu \mu}) \; = \; \frac{1}{2} \partial_{\rho}\left(  \frac{\partial \mathscr{L}}{\partial (\partial_{\rho} \psi^i)}  (J_{\mu \nu})^i_{\mspace{6mu} j} \psi^j \right)
\end{equation}
In order to obtain a symmetric tensor that gives rise to the same conserved charges as $\Theta_{\mu \nu}$, one conventionally
adds the term
\begin{equation}
- \frac{1}{2} \partial_{\rho} \left( \frac{\partial \mathscr{L}}{\partial (\partial^{\mu} \psi^i)}  (J^{\rho}_{\mspace{12mu} \nu})^i_{\mspace{6mu} j} \psi^j  \: + \: \frac{\partial \mathscr{L}}{\partial (\partial^{\nu} \psi^i)}  (J^{\rho}_{\mspace{12mu} \mu})^i_{\mspace{6mu} j} \psi^j \right) 
\end{equation}
to the right-hand side of eq. (\ref{AScanonicalSEtensor}) and then subtracts the resulting tensor from $\Theta_{\mu \nu}$
to obtain the Belinfante tensor, $\mathscr{T}_{\mu \nu}$, i.e.
\begin{equation}\label{Belinfante0}
\mathscr{T}_{\mu \nu} \; := \; \Theta_{\mu \nu} \: - \: \frac{1}{2} \partial^{\rho} \left( \frac{\partial \mathscr{L}}{\partial (\partial^{\rho} \psi^i)}  (J_{\mu \nu})^i_{\mspace{6mu} j} \psi^j \: - \: \frac{\partial \mathscr{L}}{\partial (\partial^{\mu} \psi^i)}  (J_{\rho \nu})^i_{\mspace{6mu} j} \psi^j  \: - \: \frac{\partial \mathscr{L}}{\partial (\partial^{\nu} \psi^i)}  (J_{\rho \mu})^i_{\mspace{6mu} j} \psi^j \right) 
\end{equation}
However, diffeomorphism invariance implies that the improvement term, $\partial^{\rho}S_{\mu \rho \nu}$, defined by the sum of the last three terms of eq. (\ref{Belinfante0})
can be rewritten as $\partial^{\rho}G_{\mu \rho \nu}$, with $G_{\mu \rho \nu}$ defined by eq. (\ref{intcurrent1}).
To see this, note that the Lorentz algebra representatives $(J_{\mu \nu})^i_{\mspace{6mu} j}$, have to satisfy
\begin{equation}\label{Lorentzreps}
(J_{\mu \nu})^i_{\mspace{6mu} j} \psi^j \; = \;  ^1 \Psi^i_{\mspace{10mu} \nu \mu} \: - \: ^1 \Psi^i_{\mspace{10mu} \mu \nu}
\end{equation}
with the coefficients $^1 \Psi^i_{\mspace{10mu} \nu \mu}$ defined by eq. (\ref{infvar2}) (i.e. for a generic coordinate
transformation, $x^{\mu} \rightarrow x^{\mu} + \xi^{\mu}$, $^0 \Psi^i_{\mspace{8mu} \rho} = 0$, the coefficients
$^1 \Psi^{i \mu}_{\mspace{16mu} \rho}$ are \emph{fixed} and determined by the tensor transformation law, while for an 
infinitesimal Lorentz rotation, $x^{\mu} \rightarrow x^{\mu} + \omega^{\mu}_{\mspace{10mu} \nu}x^{\nu}$, one has 
$\delta \psi^i = \, ^1 \Psi^i_{\mspace{10mu} \mu \nu} \omega^{\nu \mu} = \frac{1}{2} \omega^{\mu \nu} (J_{\mu \nu})^i_{\mspace{6mu} j} \psi^j$
by eqs. (\ref{infvar2}) and (\ref{Lorentztrans})).
The improvement term in eq. (\ref{Belinfante0}) can therefore be rewritten according to
\begin{eqnarray}
\partial^{\rho}S_{\mu \rho \nu}  & = & - \: \frac{1}{2} \partial^{\rho} \left( G_{\nu \rho \mu} \: - \:  G_{\mu \rho \nu} \: - \: G_{\nu \mu \rho} \: + \: G_{\rho \mu \nu} \: - \: G_{\mu \nu \rho} \: + \: G_{\rho \nu \mu}  \right)	\nonumber \\
			         & = & \partial^{\rho} G_{\mu \rho \nu}
\end{eqnarray}
(where eq. (\ref{utiya3}) was used several times) and the Belinfante tensor thus becomes
\begin{eqnarray}
\mathscr{T}_{\mu \nu} & = & \Theta_{(\mu \nu)} \: + \: \frac{1}{2} \partial^{\rho} \left( G_{\mu \rho \nu} \: + \: G_{\nu \rho \mu}  \right)  \nonumber \\
		      & = & \Theta_{\mu \nu} \: + \: \partial^{\rho} G_{\mu \rho \nu}							      \label{Belinfante}
\end{eqnarray}
with $\Theta_{(\mu \nu)} \; := \; \frac{1}{2}(\Theta_{\mu \nu} + \Theta_{\nu \mu})$.
\begin{thm}\label{SEtensorthm}
The tensor $T_{ab}$ defined by eq. (\ref{stressenergytensor1}) is the ``specialization'' of the general relativistic 
stress-energy tensor, as defined by the functional derivative of the matter action with respect to the metric. Conversely, the tensor
defined by (\ref{stressenergytensor1}) naturally generalizes to a general relativistic stress-energy
tensor upon employing the minimal substitution rule. When the field equations are satisfied, $T_{ab}$ equals the Belinfante 
tensor (\ref{Belinfante0})
\end{thm}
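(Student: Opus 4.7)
The plan is to handle the three assertions in turn. The first two are essentially dual statements relating the flat-space tensor (\ref{stressenergytensor1}) to its curved-space counterpart, and both reduce to a single variational calculation involving $\sqrt{-g}$. The third assertion follows almost immediately by combining corollary \ref{SEtensorcor} with the Belinfante formula (\ref{Belinfante}) just derived.

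For the first assertion, I would take as starting point the natural curved-space uplift $S_m[\psi,g] = \int_M \sqrt{-g}\, \mathscr{F}\, d^4x$ of the Minkowski action. Using the standard identity $\delta \sqrt{-g} = - \tfrac{1}{2}\sqrt{-g}\, g_{\mu \nu} \delta g^{\mu \nu}$, a direct computation gives
\begin{equation*}
- \frac{2}{\sqrt{-g}} \frac{\delta S_m}{\delta g^{\mu \nu}} \; = \; g_{\mu \nu} \mathscr{F} \: - \: 2 \frac{\partial \mathscr{F}}{\partial g^{\mu \nu}}
\end{equation*}
which is the general relativistic stress-energy tensor in the $-+++$ sign convention. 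Specializing to $g = \eta$ and restricting to global inertial coordinates, where $\sqrt{-\eta} = 1$ and $\mathscr{L}$ coincides with $\mathscr{F}$ (cf. footnote \ref{notationremark}), one recovers exactly eq. (\ref{stressenergytensor1}). The second assertion is then the converse: applying the minimal substitution rule $\eta_{ab} \to g_{ab}$, $\partial \to \nabla$ to eq. (\ref{stressenergytensor1}) produces the same expression $g_{ab}\mathscr{F} - 2 \partial \mathscr{F}/\partial g^{ab}$ and hence agrees with the functional-derivative definition.

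For the third assertion, I would invoke corollary \ref{SEtensorcor}, which supplies the off-shell decomposition $T_{ab} = \Theta_{ab} + \partial^c G_{acb} + E_i(\psi)\, {}^1\Psi^i_{\mspace{10mu}ab}$. Imposing the field equations $E_i(\psi) = 0$ reduces this to $T_{ab} = \Theta_{ab} + \partial^c G_{acb}$, which is, term-for-term, the expression (\ref{Belinfante}) established just before the theorem via eq. (\ref{Lorentzreps}) and the antisymmetry relation (\ref{utiya3}). The on-shell equality $T_{ab} = \mathscr{T}_{ab}$ is then immediate.

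The main subtlety — and, to my mind, the only real obstacle — is the symbolic nature of $\partial \mathscr{L}/\partial \eta^{\mu \nu}$ flagged in footnote \ref{notationremark}: a priori this derivative is defined only for variations of $\eta$ toward other flat metrics, whereas the functional derivative $\delta S_m / \delta g^{\mu \nu}$ contemplates arbitrary symmetric perturbations. To justify the identification I would argue that, since $\mathscr{F}$ depends on $g^{\mu \nu}$ only through scalar contractions (with the field variables treated as genuine tensors), the linear term in the Taylor expansion of $\mathscr{F}$ about $g = \eta$ is well-defined for unrestricted variations and agrees with the symbolic expression $\partial \mathscr{L}/\partial \eta^{\mu \nu}$ precisely because at $g = \eta$ the variations tangent to the subspace of flat metrics span the same symmetric two-index directions as arbitrary perturbations. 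Once this identification is in place, the three assertions follow without further calculation.
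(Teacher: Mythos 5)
Your overall route is the same as the paper's: vary the curved-spacetime uplift $S_m=\int_M\sqrt{-g}\,\mathscr{F}$ with respect to $g^{ab}$ for the first two assertions, and obtain the on-shell Belinfante identification by setting $E_i(\psi)=0$ in corollary \ref{SEtensorcor} and invoking eq. (\ref{Belinfante}); that third step is exactly how the paper proceeds and is fine. The gap is in your variational computation. Since $\mathscr{F}$ depends on $\nabla_a\psi^i$ and the derivative operator itself depends on the metric, $\delta S_m$ contains, besides $\bigl(\mathscr{F}g_{ab}-2\,\partial\mathscr{F}/\partial g^{ab}\bigr)\delta g^{ab}$, a further term proportional to $(\partial\mathscr{F}/\partial(\nabla_a\psi^i))\,\delta(\nabla_a\psi^i)$ coming from the variation of the connection; modulo a boundary term it contributes an additional $(\cdots)_{ab}\,\delta g^{ab}$ to the functional derivative (this is kept explicitly in the paper's eq. (\ref{varaction3})). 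Your displayed identity $-\tfrac{2}{\sqrt{-g}}\,\delta S_m/\delta g^{\mu\nu}=g_{\mu\nu}\mathscr{F}-2\,\partial\mathscr{F}/\partial g^{\mu\nu}$ is therefore false for generic tensor fields (it holds for Klein--Gordon and Maxwell theory, where the Christoffel contributions drop out), and the same omission undercuts your converse step: the minimal-substitution uplift of (\ref{stressenergytensor1}) does \emph{not} in general coincide with the functional-derivative definition (\ref{GRSEtensor}) -- it differs by precisely that term.

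The paper gets around this in two ways you would need to reproduce. For the ``specialization'' direction it notes that the offending term is absent in the Minkowski case, where the derivative operator appearing in $\mathscr{L}$ is the metric-independent $\partial_a$, so only the explicit algebraic metric dependence survives and (\ref{GRSEtensor}) collapses to (\ref{stressenergytensor1}). For the converse it does not compare algebraic formulas at all: it expresses the flat-spacetime tensor symbolically as $T_{ab}=-(2/\sqrt{-\eta})\,\partial\mathscr{L}/\partial\eta^{ab}$, i.e. as $\tfrac{1}{\sqrt{-\eta}}\,\delta S/\delta\eta^{ab}$ up to a constant, verified through the integral identity $\int(\partial\mathscr{L}/\partial\eta^{ab})\,\delta\eta^{ab}=-\tfrac{1}{2}\int\sqrt{-\eta}\,T_{ab}\,\delta\eta^{ab}$ for diffeomorphism-generated (hence flat) variations, and then observes that this functional-derivative \emph{form} goes over into (\ref{GRSEtensor}) under minimal substitution. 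Relatedly, your closing ``spanning'' argument is not quite the right repair for the symbolic nature of $\partial\mathscr{L}/\partial\eta^{ab}$: variations toward flat metrics span all symmetric directions pointwise but not as arbitrary tensor fields, which is why the paper simply treats the flat functional derivative symbolically (footnote \ref{notationremark}) and anchors the identification in the explicit algebraic dependence of $\mathscr{F}$ on the metric rather than in a density claim about flat perturbations.
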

\begin{proof}
In a general spacetime $(M , g_{ab})$, the stress-energy tensor $T_{ab}$ for any non-gravitational (i.e. ``matter'') field,
or collection of such fields, $\psi^i$, described by an action $S$ is defined by
\begin{equation}\label{GRSEtensor}
T_{ab} \; := \; - \frac{1}{\sqrt{-g}} \frac{\delta S}{\delta g^{ab}}
\end{equation}
(up to an overall proportionality constant). According to the discussion in section \ref{Noetherthms}, $S$ is of the 
general form (\ref{actiongeneric}), with $\mathscr{F}$ a true scalar quantity with a local dependence on $\psi^i$, 
$\nabla_{a_1} \psi^i$, $\cdots$, $\nabla_{a_1} \cdots \nabla_{a_k} \psi^i$, for $k$ finite. As before, it is assumed for convenience
that $k=1$. Thus, for a smooth, one-parameter family of variations of the metric field, one finds
\begin{eqnarray}
\delta S     & = & \int_M \! \{ (\delta \sqrt{-g}) \mathscr{F} \: + \: \sqrt{-g} \delta \mathscr{F} \}     \nonumber   \\
	     & = & - \frac{1}{2} \int_M \! \sqrt{-g} \left\{ (\mathscr{F} g_{ab} \: - \: 2 \frac{\partial \mathscr{F}}{\partial g^{ab}} ) \delta g^{ab} \: - \:  2 \frac{\partial \mathscr{F}}{\partial \nabla_a \psi^i} \delta \nabla_a \psi^i \right\} \label{varaction3}
\end{eqnarray}
where the last term arises from the variation of the derivative operator $\nabla_a$, induced by the variation of the metric.
Modulo a boundary term, this last term can also be written in the form\footnote{See e.g. \citeN{HawEll} for how to achieve 
this, in principle; also note that for simple theories, such as Klein-Gordon theory or Maxwell theory, this last term vanishes.}
$(\cdots)_{ab} \delta g^{ab}$, but for present purposes it is sufficient to note that this term vanishes in the case of Minkowski spacetime
and that, in that case, the definition (\ref{GRSEtensor}) precisely leads back to the definition (\ref{stressenergytensor1}),
up to a simple overall numerical factor and with the proviso that the functional derivative, $\delta S / \delta \eta^{ab}$,
is to be understood symbolically in the flat spacetime case (this is because $\delta S / \delta \eta^{ab}$ in this case is
only defined for directions of variation toward other flat metrics). 
Conversely, the fact that the stress-energy tensor, $T_{ab}$, in Minkowski spacetime, as defined by eq. (\ref{stressenergytensor1}),
naturally generalizes to the definition (\ref{GRSEtensor}) for generic spacetimes upon employing the ``minimal substitution''
rule, $\eta_{ab} \rightarrow g_{ab}$, $\partial_a \rightarrow \nabla_{a}$, easily follows from the observation that $T_{ab}$
can be expressed symbolically as $1 / \sqrt{- \eta} \: \delta S / \delta \eta^{ab}$, modulo an overall proportionality constant (i.e. if
$\eta_{\lambda} := \phi^{\ast}_{\lambda} \eta$ denotes a one-parameter family of field variations obtained by acting on 
$\eta$ with a one-parameter family of diffeomorphisms, $\phi_{\lambda}$, $\lambda \in \real{}$, $\phi_0 := \mathbbm{1}$, it is easily checked that 
$\int_{\real{4}} \! ( \partial \mathscr{L} / \partial \eta^{ab}) \, \delta \eta^{ab} = - 1/2 \int_{\real{4}} \! \sqrt{- \eta} \, T_{ab} \, \delta \eta^{ab}$
- where $\delta \eta^{ab} := (d \eta_{\lambda}^{ab} / d \lambda)_{\lambda = 0}$ and where the global coordinate system 
implicit in the integral is now not necessarily inertial).
\end{proof}
\noindent From the definition (\ref{Belinfante0}) of the Belinfante tensor, it is not at all clear whether this tensor generalizes
to a stress-energy tensor in a curved spacetime that agrees with the definition (\ref{GRSEtensor}). Although it is straightforward
to generalize both the canonical stress-energy tensor and the improvement term in eq. (\ref{Belinfante0}) to curved spacetime
expressions, $\Theta_{ab}$ and $\nabla^c S_{acb}$, respectively, by using the minimal substitution rule, 
$\eta_{ab} \rightarrow g_{ab}$, $\partial_a \rightarrow \nabla_{a}$, Noether's theorem is no longer available to guarantee
the conservation of $\Theta_{ab}$, in general, while the divergence of $\nabla^c S_{acb}$ no longer vanishes identically
because of the spacetime curvature. By contrast, conservation of the stress-energy tensor (\ref{GRSEtensor}) 
is easily shown to be a consequence of diffeomorphism invariance, if the matter fields satisfy their field equations.
It has been known for some time already that the expression for the generalized Belinfante tensor agrees with the definition
(\ref{GRSEtensor}), but the standard proof of this equality turns out to be quite involved\footnote{See \citeN{Kuchar}.
For a more recent derivation of the relation between the tensors $\Theta_{ab}$ and $T_{ab}$, based upon a systematic treatment
of symmetries within the framework of classical theories of gravity formulated in spacetimes of arbitrary  dimensions, see
the Appendix of \citeN{IyerWald}. It should be noted however that, in seeming contrast to the latter treatment, for the
off-shell relation (\ref{stressenergytensor2}) between $\Theta_{ab}$ and $T_{ab}$ to reduce to the on-shell, physical relation 
(\ref{Belinfante}), it is necessary only to assume that matter fields transforming with nonzero $^1 \Psi^{i}_{\mspace{10mu} ab}$ 
satisfy their field equations. For further comparison with \citeANP{IyerWald}'s work, see section \ref{discussion}.}.
Here, the fact that the two tensors are equal has been re-derived as an almost trivial consequence of the 
generalized Noether theorem \ref{Ntheorem3}, as applied to general diffeomorphisms.

\subsection{An Alternative Route to General Relativity}\label{gravity}

\noindent Suppose one were given the task of formulating a field theory for gravity consistent with relativistic principles,
given only the state of knowledge in physics of 1905, but given also, somehow, the entire present state of knowledge in
mathematics (thus including in particular modern-day tensor calculus and the modern theory of group representations). Obviously,
the most straightforward first step would be to attempt to reconcile Newtonian gravity with special relativity, by
modifying the Poisson equation for a gravitational potential, $\phi$, into a relativistic wave equation of the generic
form
\begin{equation}\label{relPoisson}
\Box  \, \phi \; = \; 4 \pi \, \rho
\end{equation}
with $\rho$ representing the mass density of matter present (and where $\Box := \partial_a \partial^a$, as usual, and units
are used in which $G = c =1$).
However, it can be shown that, under quite general conditions, it is impossible to obtain eq. (\ref{relPoisson}) as the leading
order contribution from an equation of motion for a massless field of one of the usual spin types within the context of
special relativity\footnote{See e.g. \citeN{MTW}.}.
Given this, together with the remarks on the equivalence principle near the end of section \ref{comparison},
it then seems most natural to attempt to identify the gravitational field with the dynamical metric tensor, $g_{ab}$, of a
more generic spacetime geometry. In relation to the foregoing discussion, this means the following.
As already remarked before, the appearance of the non-standard term in eqs. (\ref{NoetherVP1}), (\ref{NoetherVP2}), (or, 
alternatively, the appearance of the stress-energy tensor, $T_{ab}$, in the generalized Noether theorems (\ref{Ntheorem1})-(\ref{Ntheorem3})),
can be traced to the fact that the Minkowski metric is not a dynamical field. For the derivative term $\partial \mathscr{L} / \partial \eta^{ab}$
this should be obvious (recalling note \ref{notationremark}), while for the other term it follows from the fact that for an arbitrary,
infinitesimal coordinate transformation, $x^{\mu} \rightarrow x^{\mu} + \delta x^{\mu}$, $\delta \sqrt{- \eta}$ equals
$- \sqrt{- \eta} \, \partial_{\mu} \delta x^{\mu}$ (in fact, these remarks apply equally well to a theory involving a
non-dynamical, curved metric $g_{ab}^{\mbox{\scriptsize stat}}$).
In the case of a dynamical metric field, $g_{ab}$, this means that the same steps that led to lemma \ref{NoetherVPlem} 
now lead to the usual solution to the variational problem, with the non-standard term in eq. (\ref{NoetherVP2}) absorbed 
into the generic sum term, $- E_i \bar{\delta} \psi^i$, involving all the Lagrange expressions for the various dynamical 
fields (and where all expressions are now understood to be valid only locally).
Of course, with $g_{ab}$ dynamic, it is necessary to add kinetic terms for it to the original Lagrangian density for the 
non-gravitational fields. In this regard, it seems reasonable to assume that the total density, $\mathscr{L}$, can be 
obtained by applying the minimal substitution rule, $\eta_{ab} \rightarrow g_{ab}$, $\partial_a \rightarrow \nabla_{a}$, 
to the original density, $\mathscr{L}_{\mbox{\scriptsize M}}$, for the non-gravitational fields, with $\nabla_a$ the 
derivative operator compatible with $g_{ab}$, and then to add to this altered form of $\mathscr{L}_{\mbox{\scriptsize M}}$ 
a density, $\mathscr{L}_{\mbox{\scriptsize G}}$, involving only $g_{ab}$, and a finite number, $k$ say, of its spacetime 
derivatives. Thus, on denoting all non-gravitational fields again by $\psi$, $\mathscr{L}$ is given by
\begin{equation}\label{gravdensity}
\mathscr{L} \; := \; \mathscr{L}_{\mbox{\scriptsize G}}(g , \partial g , \cdots , \partial^k g) \: + \: \mathscr{L}_{\mbox{\scriptsize M}} (g, \psi , \nabla \psi )
\end{equation}
with a corresponding expression for the total action, $S$. Now, the gravitational field equation is of course given by
\begin{equation}\label{gravfieldeq}
\frac{\delta S}{\delta g^{ab}} \; = \; \frac{\delta S_{\mbox{\scriptsize G}}}{\delta g^{ab}} \: + \: \frac{\delta S_{\mbox{\scriptsize M}}}{\delta g^{ab}} \; =: \; \tilde{E}_{ab}(g) \: + \: \frac{\delta S_{\mbox{\scriptsize M}}}{\delta g^{ab}} \; = \; 0
\end{equation}
with $\tilde{E}_{ab}$ denoting the appropriate generalization of the Lagrange expression (\ref{Lagrexpr}) for $\mathscr{L}_{\mbox{\scriptsize G}}$.
But, according to the results of subsection \ref{SEMtensor}, up to a proportionality constant, $(-g)^{-1/2} \delta S_{\mbox{\scriptsize M}} / \delta g^{ab}$
is simply the generalization, $T_{ab}$, of the stress-energy tensor (\ref{stressenergytensor1}), for the non-gravitational 
fields in Minkowksi spacetime, to a generic spacetime $M$ with metric $g_{ab}$ (cf. in particular eq. (\ref{varaction3}) 
and the remarks following it). Thus, the field equation (\ref{gravfieldeq}) for $g_{ab}$ becomes
\begin{equation}
\tilde{E}_{ab}(g) \: + \: \beta \sqrt{-g} T_{ab} \; = \; 0
\end{equation}
$\beta \in \real{}$. On the other hand, it seems reasonable to assume that $\tilde{E}_{ab}$ contains at most second order
derivatives of the metric (under very generic conditions, this guarantees the existence of a well-posed initial value formulation 
for the pure theory\footnote{See e.g. \citeN{Wald}, chapter 10.}).
In that case however, it is known that, in four spacetime dimensions the most general divergence-free, second order expression
for $\tilde{E}_{ab}$ is a linear combination of the Einstein tensor, $G_{ab}$, and the metric itself\footnote{Cf. \citeN{Lovelock}.
Here, the fact that both $T_{ab}$ and $(-g)^{-1/2}\tilde{E}_{ab}$ are divergence-free can easily be shown to be a consequence of 
general diffeomorphism invariance of $S_{\mbox{\scriptsize M}}$ and $S_{\mbox{\scriptsize G}}$, respectively (i.e. without
the need to impose any field equation).}, which can be derived from the familiar Einstein-Hilbert action with a cosmological constant term, i.e.
\begin{equation}
S_{\mbox{\scriptsize G}} \; = \; \int_M \! \sqrt{-g} ( R \: - \: 2 \Lambda)
\end{equation}
with $R$ and $\Lambda$ respectively denoting the scalar curvature and the cosmological constant.
The field equation thus becomes
\begin{equation}
\frac{\delta S}{\delta g^{ab}} \; = \; G_{ab} \: + \: \Lambda g_{ab} \: + \: \beta T_{ab} \; = \; 0
\end{equation}
which reduces to the usual form of Einstein's equation with a cosmological constant upon setting $\beta = - 8 \pi$.
It is worth commenting on the significance of the above derivation.
In many standard treatments Einstein's equation is introduced first on various plausibility grounds (e.g. by noting that
it allows one to recover eq. (\ref{relPoisson}) in the weak-field, slow-motion limit, for $\Lambda$ sufficiently small)
and a variational formulation for general relativity is introduced only at a later stage\footnote{See e.g. \citeN{MTW} or \citeN{Wald}.}.
Consistency then requires that the stress-energy tensor appearing in Einstein's equation be \emph{defined} as the functional
derivative of the action for the non-gravitational fields with respect to the spacetime metric.
By contrast, the above route to Einstein's equation started from a variational formulation for Minkowski spacetime based field
theories. It was observed that the general covariance of such theories leads to generalized Noether theorems and that, as
a result of this, a symmetric, manifestly gauge invariant stress-energy tensor, $T_{ab}$, is obtained, which is given by
\begin{equation}\label{stressenergytensor3}
T_{ab} \; = \; \mathscr{F} \eta_{ab}  - 2 \frac{\partial \mathscr{F}}{\partial \eta^{ab}} \; = \; - \frac{2}{\sqrt{- \eta}} \frac{\partial}{\partial \eta^{ab}} \sqrt{- \eta} \mathscr{F} \; = \; - \frac{2}{\sqrt{- \eta}} \frac{\partial \mathscr{L}}{\partial \eta^{ab}}
\end{equation}
It was then noted that if it is attempted to set up a variational formulation for a field theory within the context of a
more generic spacetime geometry, in which the gravitational field is identified with a dynamical, curved metric of that
geometry, the stress-energy tensor for the non-gravitational fields is given by $(-g)^{-1/2} \delta S / \delta g^{ab}$ (up
to an overall factor) - i.e. the generalization of (\ref{stressenergytensor3}) to a general spacetime geometry.
Thus, in our treatment, the expression (\ref{GRSEtensor}) is not a definition that is required in order to maintain consistency
with Einstein's equation, but is simply a consequence of the form for the ``correct'', physical stress-energy tensor for (non-gravitational)
fields in Minkowski spacetime. With this form for $T_{ab}$, Einstein's equation was then seen to be a consequence of a
variational formulation after imposing a further plausiblity condition on the Lagrange expression, $\tilde{E}_{ab}$, for
the pure theory. Both above routes to general relativity are of course equally valid and the purpose of this subsection
merely was to point out that there exists an alternative route to Einstein's equation in which the generalized Noether 
theorems play a crucial role.

\section{Discussion}\label{discussion}

\noindent Historically, the principle of general covariance has led to a considerable amount of confusion. As mentioned in
section \ref{intro}, this principle may essentially be viewed as expressing the form invariance of the equations of physics
with respect to general coordinate transformations. The significance of the principle in this form as a principle of physics
may be questioned however, since any theory formulated in terms of tensor fields is automatically generally covariant in
this sense\footnote{For an early critique along such lines, see \citeN{Kretschmann}.}. A slightly sharper formulation of
the principle states that the metric of space(time) is the only quantity ``pertaining to space(time)'' that can appear in
any law of physics. This implies for instance that, in the case of a perturbative formulation for a spin-two field, 
$\gamma_{ab}$, initially formulated in ordinary Minkowski spacetime, it should be possible to eliminate all reference 
to the flat background metric in the full nonlinear theory for the spacetime metric, $g_{ab}$, that reduces to $\eta_{ab} + \gamma_{ab}$ 
in the weak-field limit. Formulated this way, the principle can indeed act as a sieve for separating viable from non-viable 
theories of gravity\footnote{See \citeN{Wald1} for a concrete example.}.\\
In the present article, general covariance in the sense of a ``principle'' expressing the diffeomorphism invariance of physical
theories has been used to establish generalized Noether symmetry theorems for field theories in Minkowski spacetime and derivable 
from an action principle. The most notable application of these theorems was seen to consist in new relations between various
different definitions of the stress-energy tensor for any field theory.
It is instructive to compare these findings to related results previously established in the literature\footnote{I thank
an anonymous referee for drawing my attention to these results.}.
In particular, in \citeN{IyerWald}, a general systematic treatment of Noether currents is presented for classical theories 
of gravity formulated in generic spacetimes of arbitrary dimension and for actions which are ``diffeomorphism covariant''.
The technical implementation of this latter condition - not to be confused with the condition of general covariance as 
used in the first sense above - implies the absence of nondynamical fields in the action and it thus seems that there cannot
be any overlap with results established in the present work. Nevertheless, in an Appendix to their paper, \citeANP{IyerWald}
point out that a number of their results continue to hold for theories described by actions that are not diffeomorphism 
covariant and they then go on to derive a relation between the stress-energy tensor as a variational derivative with respect 
to the Minkowski metric and the canonical stress-energy tensor. It should be noted however that their relation (141) in this
regard does not specify a concrete form for the improvement term $\partial_c H^{cab}$, $H^{abc}=H^{[ab]c}$. In addition, the 
notion of symmetries employed by them is non-standard in the sense that a Noether current is associated to each ``infinitesimal 
local symmetry''\footnote{See \citeN{LeeWald} for a precise definition. The term ``Noether current'' has its conventional 
meaning here, in that it is a current associated with a symmetry which is conserved only if the fields satisfy their field equations.}, 
as a result of which the usual global/local distinction prevalent in standard treatments of Noether's theorems (as referred 
to earlier) is largely bypassed. By contrast, the scope of the present work is intrinsically more restricted, as only
classical field theories formulated in ordinary Minkowski spacetime are considered and the usual global/local distinction
in relation to Noether's theorems is largely maintained. Yet, this was seen to lead to \emph{generalized} Noether theorems
(relative to the standard treatments), which in turn also gave rise to a relation between the stress-energy tensor as a 
variational derivative with respect to the Minkowski metric and the canonical stress-energy tensor, but this time with
a concrete form for the improvement term and with an additional off-shell contribution (\ref{stressenergytensor2}). 
It was furthermore shown that the improvement term in this case is nothing but the familiar Belinfante term (\ref{Belinfante}), 
although it was obtained by exploiting general diffeomorphism invariance rather than Poincar\'e invariance.
A slight drawback of the present discussion (especially when compared to the Iyer-Wald treatment), is that it explicitly involves 
use of coordinates at many places. This issue will hopefully be addressed in future work (e.g. by switching to an ``active''
perspective on diffeomorphisms). At any rate, the fact that it is possible to derive the correct form for the stress-energy
tensor from the property that physical theories be generally covariant (as characterized in the first sense above) suggests
that there may be more substance to this property than usually granted.

\bibliographystyle{eigen}
\bibliography{Noetherthms}

\end{document}